\providecommand{\tabularnewline}{\\}
\providecommand{\algorithmname}{Algorithm}
\theoremstyle{plain}
\newtheorem{thm}{\protect\theoremname}[section]
\numberwithin{thm}{section}
\theoremstyle{remark}
\newtheorem{rem}{\protect\remarkname}[section]
\numberwithin{rem}{section}
\theoremstyle{plain}
\newtheorem{prop}{\protect\propositionname}[section]
\numberwithin{prop}{section}
\def\@seccntformat#1{\@ifundefined{#1@cntformat}%
   {\csname the#1\endcsname\quad}  % default
   {\csname #1@cntformat\endcsname}% enable individual control
}
\let\oldappendix\appendix %% save current definition of \appendix
\renewcommand\appendix{%
    \oldappendix
    \newcommand{\section@cntformat}{\appendixname~\thesection:\;}
}
\begin{document}

\title{An algorithm for approximating the second moment of the normalizing
constant estimate from a particle filter%\thanks{Grants or other notes
%about the article that should go on the front page should be
%placed here. General acknowledgments should be placed at the end of the article.}
}
%\subtitle{Do you have a subtitle?\\ If so, write it here}

\titlerunning{Pairs algorithm}        % if too long for running head

\author{Svetoslav Kostov         \and
        Nick Whiteley %etc.
}

\authorrunning{S.Kostov \and N.Whiteley} % if too long for running head

\institute{Svetoslav Kostov \at
              School of Mathematics, University Walk, Bristol, BS8 1TW, UK \\
              %Tel.: +123-45-678910\\
              %Fax: +123-45-678910\\
              \email{svetoslav.kostov@bristol.ac.uk}           %  \\
%             \emph{Present address:} of F. Author  %  if needed
           \and
           Nick Whiteley \at
              School of Mathematics, University Walk, Bristol, BS8 1TW, UK \\
              \email{nick.whiteley@bristol.ac.uk}
}

\date{Received: date / Accepted: date}
% The correct dates will be entered by the editor

\maketitle

\begin{abstract}
We propose a new algorithm for approximating the non-asymptotic second
moment of the marginal likelihood estimate, or normalizing constant,
provided by a particle filter. The computational cost of the new method
is $O(M)$ per time step, independently of the number of particles
$N$ in the particle filter, where $M$ is a parameter controlling
the quality of the approximation. This is in contrast to $O(MN)$
for a simple averaging technique using $M$ i.i.d. replicates of a
particle filter with $N$ particles. We establish that the approximation
delivered by the new algorithm is unbiased, strongly consistent and,
under standard regularity conditions, increasing $M$ linearly with
time is sufficient to prevent growth of the relative variance of the
approximation, whereas for the simple averaging technique it can be
necessary to increase $M$ exponentially with time in order to achieve
the same effect. This makes the new algorithm useful as part of strategies for estimating Monte Carlo variance. Numerical examples illustrate performance in the
context of a stochastic Lotka\textendash Volterra system and a simple
AR(1) model.
\keywords{marginal likelihood \and normalizing constant \and hidden Markov model \and particle
filter}
% \PACS{PACS code1 \and PACS code2 \and more}
% \subclass{MSC code1 \and MSC code2 \and more}
\end{abstract}

\section{Introduction}
\label{sec:SMC-and-HMM}

Particle filters, also known as Sequential Monte Carlo (SMC) methods
\citep{DoucetSMCMethodsInPractice}, are used across a variety of
disciplines including systems biology, econometrics, neuroscience
and signal processing, to perform approximate inferential calculations
in general state-space Hidden Markov Models (HMM) and in particular,
provide an unbiased estimate of the marginal likelihood. Recent application
areas of these techniques include for example, systems biology \citep{golightly2011bayesian,golightly2015delayed},
where the calculation of the marginal likelihood (ML) plays an important
role in the estimation of the parameters of stochastic models of biochemical
networks. Estimation of the marginal likelihood also features centrally
in Particle Markov Chain Monte Carlo methods \citep{AndrieuPMCMC}.

In the present paper we address the problem of approximating the non-asymptotic
second moment of the particle filter estimate of the marginal likelihood, henceforth for brevity ``the second moment''. As part of strategies to estimate Monte Carlo variance, this allows one to report a numerical measure of the reliability of the particle filter estimate.  Our
contributions are to introduce a new particle ``Pairs algorithm''
and prove that it unbiasedly and consistently approximates the second
moment. We also establish, under regularity conditions, a linear-in-time
bound on the relative variance of the approximation to the second
moment, and illustrate through a simple calculation and numerical
simulations, that the Pairs algorithm performs more reliably than
a default strategy which uses independent copies of the particle filter.
In order to discuss the connections between our work and the existing
literature, we first need to introduce some notation and definitions.

A HMM is a process $(X_{n},Y_{n})_{n\geq0}$, where $\left(X_{n}\right)_{n\geq0}$,
called the signal process, is a Markov chain with state space $\mathsf{X}$,
initial distribution $\pi_{0}$ and transition kernel $f$. Each of
the observations $Y_{n}\in\mathsf{Y}$, is conditionally independent
of the rest of the signal process given $X_{n}$, with conditional distribution,
$g(X_{n},\cdot)$, where $g$ is a probability kernel from $\mathsf{X}$
to $\mathsf{Y}$. The HMM can be represented as:
\begin{eqnarray}
X_{0}\sim\pi_{0}(\cdot), &  & X_{n}\mid X_{n-1}\sim f(X_{n-1},\cdot),\quad n\geq1\label{eq:HMM-def}\\
 &  & \; Y_{n}\mid X_{n}\sim g(X_{n},\cdot),\quad n\geq0.\nonumber
\end{eqnarray}

We consider a fixed observation sequence $(y_{n})_{n\geq0}$, assume
that $g$ admits a density $g(x,y)$ w.r.t. to some dominating measure
and write for brevity $g_{n}(x)=g(x,y_{n})$. For simplicity we also
assume throughout that for all $n\geq0$, $\sup_{x}g_{n}(x)<+\infty$
and $g_{n}\left(x\right)>0$, $\forall x\in\mathsf{X}$. We then define
the sequence of distributions $(\pi_{n})_{n\geq1}$, called prediction
filters, as
\[
\pi_{n+1}(A):=\dfrac{\int_{\mathsf{X}}\pi_{n}(dx)g_{n}(x)f(x,A)}{\int_{\mathsf{X}}\pi_{n}(dx)g_{n}(x)},\quad\forall A\in\mathcal{X},\; n\geq0,
\]
where $\mathcal{X}$ is the $\sigma$-algebra associated with the
space $\mathsf{X}$, and the sequence
\begin{equation}
(Z_{n})_{n\geq0},\quad Z_{0}:=\int_{\mathsf{X}}g_{0}(x)\pi_{0}(dx),\quad Z_{n}:=Z_{n-1}\int_{\mathsf{X}}g_{n}(x)\pi_{n}(dx),\quad n\geq1.\label{eq:Z_n_defn}
\end{equation}

The interpretation of these definitions is the following: $\pi_{n+1}$
is the distribution of $X_{n+1}\mid Y_{0:n}=y_{0:n}$, where for any
sequence $(a_{n})_{n\geq0}$ we write $a_{p:q}=(a_{p},\ldots,a_{q})$,
and $Z_{n}$ is the marginal likelihood of the first $n+1$ observations
$y_{0:n}$. In many cases of interest, the distributions $\pi_{n}$
and constants $Z_{n}$ cannot be computed exactly, and numerical approximations
are needed. A particle filter, shown in Algorithm \ref{alg:standard-SMC-algorithm},
provides such approximations, denoted respectively $\pi_{n}^{N}$
and $Z_{n}^{N}$. In Algorithm \ref{alg:standard-SMC-algorithm} $q_{0}$
and $q_{n}$, $n\geq1$ are respectively a distribution and Markov
kernels on $\mathsf{X}$, which may depend on the observations sequence
$(y_{n})_{n\geq0}$, but this dependence is suppressed from the notation.
We assume throughout the rest of the paper that $\pi_{0}(\cdot)$,
$f(x,\cdot)$ and $q_{0}(\cdot)$ and $q_{n}(x,\cdot)$ admit a density
w.r.t. to some common dominating measure $dx$, and with a slight
abuse of notation, the corresponding densities are denoted by $\pi_{0}(x)$,
$f(x,x^{\prime})$, $q_{0}(x)$ and $q_{n}(x,x^{\prime})$.
\begin{algorithm}
\protect\caption{\label{alg:standard-SMC-algorithm}SMC algorithm for estimating $Z_{n}$
using $N$ particles}

Initialization
\begin{itemize}
\item Sample $\left\{ X_{0}^{i}\right\} _{i=1}^{N}\overset{i.i.d.}{\sim}q_{0}(\cdot)$
\item Compute weights $\left\{ W_{0}^{i}\right\} _{i=1}^{N}$ according
to $W_{0}^{i}=\dfrac{g_{0}(X_{0}^{i})\pi_{0}(X_{0}^{i})}{q_{0}(X_{0}^{i})}$

normalize, $\widetilde{W}_{0}^{i}=\dfrac{W_{0}^{i}}{\sum_{k=1}^{N}W_{0}^{k}}$,
and set $Z_{0}^{N}=\dfrac{1}{N}\sum_{i=1}^{N}W_{0}^{i}$

\item Resample conditionally i.i.d. draws from $\left\{ X_{0}^{i}\right\} _{i=1}^{N}$
using the normalized weights $\left\{ \widetilde{W}_{0}^{i}\right\} _{i=1}^{N}$
to obtain a set of equally-weighted particles $\left\{ \underline{X}_{0}^{i}\right\} _{i=1}^{N}$
\end{itemize}
For $n\geq1$:
\begin{itemize}
\item For each $i$, set $X_{n-1}^{i}=\underline{X^{i}}_{n-1}$
\item For each $i$, sample $X_{n}^{i}\sim q_{n}(X_{n-1}^{i},\cdot)$, compute
weights $W_{n}^{i}=\dfrac{g_{n}(X_{n}^{i})f(X_{n-1}^{i},X_{n}^{i})}{q_{n}(X_{n-1}^{i},X_{n}^{i})}$,

normalize, $\widetilde{W}_{n}^{i}=\dfrac{W_{n}^{i}}{\sum_{k=1}^{N}W_{n}^{k}}$,
and set $Z_{n}^{N}=Z_{n-1}^{N}\cdot\left(\dfrac{1}{N}\sum_{i=1}^{N}W_{n}^{i}\right)$

\item Resample conditionally i.i.d. draws from $\left\{ X_{n}^{i}\right\} _{i=1}^{N}$
using the normalized weights $\left\{ \widetilde{W}_{n}^{i}\right\} _{i=1}^{N}$
to obtain a set of equally-weighted particles $\left\{ \underline{X^{i}}_{n}\right\} _{i=1}^{N}$\end{itemize}
\end{algorithm}

It is well known that Algorithm \ref{alg:standard-SMC-algorithm}
provides an unbiased estimate of $Z_{n}$, i.e. $\mathbb{E}\left[Z_{n}^{N}\right]=Z_{n}$.
A detailed account of this fact is given in \citep[Ch. 9]{DelMoral04}.
The main contribution of the present paper is to propose and study
a new method to approximate $\mathbb{E}\left[\left(Z_{n}^{N}\right)^{2}\right]$.
The approximation is delivered by Algorithm \ref{alg:PairsAlgo} -- the Pairs algorithm --
which we introduce in the next section, and which must be run \emph{in
addition} to the particle filter used to estimate $Z_{n}^{N}$.  Our main motivation for approximating $\mathbb{E}\left[\left(Z_{n}^{N}\right)^{2}\right]$ is to calculate $\text{Var}\left[Z_{n}^{N}\right]$. In
a recent arXiv manuscript \citep{lee2015variance}, A. Lee and the
second author of the present paper have introduced a method which
allows one to unbiasedly approximate $\text{Var}\left[Z_{n}^{N}\right]$
using the same single run of the particle filter which delivers $Z_{n}^{N}$.
As $N\to\infty$, the method of \citet{lee2015variance} allows one
to consistently approximate asymptotic variance $\lim_{N\to\infty}N\text{Var}\left[Z_{n}^{N}\right]$.

We stress that the Pairs algorithm performs the different
task of approximating, for any \emph{fixed} $N\geq2$, the non-asymptotic
quantity $\mathbb{E}\left[\left(Z_{n}^{N}\right)^{2}\right]$ to arbitrary
accuracy controlled by an auxiliary parameter $M$ (this statement
is made precise in Theorem \ref{thm:Pairs-convergence-and-bound}
below). Thus the Pairs algorithm allows one to reliably approximate $\mathbb{E}\left[\left(Z_{n}^{N}\right)^{2}\right]$ without requiring that $N$ is large. We shall later illustrate how this property makes the Pairs algorithm useful within strategies for estimating $\text{Var}\left[Z_{n}^{N}\right]$.

Moreover in Theorem \ref{thm:Pairs-convergence-and-bound} we prove an important
result regarding the time dependence of the error of the approximation of $\mathbb{E}\left[\left(Z_{n}^{N}\right)^{2}\right]$ delivered by the Pairs algorithm, showing that under standard regularity conditions, it is sufficient to increase $M$ linearly with $n$ to control the relative variance of this approximation.
This is in contrast to \citet{lee2015variance}, who do not provide any results concerning the time-dependence of the errors associated with their estimators. 

We note that \citet{chan2013} investigated numerical techniques for assessing
the asymptotic variance associated with particle estimates of expectations
with respect to filtering distributions, but they didn't explore methods
for approximating $\mathbb{E}\left[\left(Z_{n}^{N}\right)^{2}\right]$. We also note that \citet{bhadra2014adaptive}
proposed to approximate $\mathbb{E}\left[\left(Z_{n}^{N}\right)^{2}\right]$
using a ``meta-model'', for purposes of optimizing parameters of
the particle filter. Their method amounts to fitting an AR(1) process
to the output of the particle filter; it seems difficult to assess
the bias of their approach and no proof of consistency is given.

\section{Pairs algorithm}
\label{sec:Pairs-algorithm}

\subsection{Outline of how the algorithm is derived}

The full details of the derivation of the Pairs algorithm are given in Appendix \ref{sec:Appendix}. We now give an account of some of the main ideas behind this derivation. For this some more notation is needed. Let us introduce the nonnegative integral kernels:
for $x\in\mathsf{X},y=(y_{1},y_{2})\in\mathsf{X}^{2}$,
\begin{equation}
Q_{1}(x,dy) = \frac{g_{0}(x)\pi_{0}(x)}{q_{0}(x)}q_{1}(y_{1},y_{2})
\delta_{x}(dy_{1})dy_{2},
\end{equation}
and for $n\geq2$ and $x=(x_{1},x_{2}) \in\mathsf{X}^{2},y\in\mathsf{X}^{2}$,
\begin{equation}
Q_{n}(x,dy) = \frac{g_{n-1}(x_{2})f(x_{1},x_{2})}{q_{n-1}(x_{1},x_{2})}
q_{n}(y_{1},y_{2})\delta_{x_{2}}(dy_{1})dy_{2}.
\end{equation}
In terms of compositions of these kernels, the lack-of-bias property of the particle filter reads as:
\begin{equation}
\mathbb{E}\left[Z_n^N\right]=\pi_0Q_1\cdots Q_n(1).\label{eq:unbiased_Q}
\end{equation}
The kernels also encapsulate the main ingredients of the particle filter itself, indeed one may take the point of view that Algorithm \ref{alg:standard-SMC-algorithm} is actually derived from the $Q_n$, in the sense that resampling is performed according to weights given by evaluating the functions
\begin{equation}
Q_{1}(x,\mathsf{X}^2)=\frac{g_{0}(x)\pi_{0}(x)}{q_{0}(x)},\quad Q_n(x,\mathsf{X}^2)=\frac{g_{n-1}(x_{2})f(x_{1},x_{2})}{q_{n-1}(x_{1},x_{2})},\;\;n\geq2,\label{eq:Q_G_front}
\end{equation}
and sampling is performed using the the Markov kernels:
\begin{equation}
\frac{Q_{n}(x,\cdot)}{Q_{n}(x,\mathsf{X}^2)}.\label{eq:Q_M_front}
\end{equation}

Now introduce the so--called coalescence operator $C$ which acts on
functions $F:\mathsf{X}^{2}\times \mathsf{X}^{2} \rightarrow \mathbb{R}$ as $C(F)(x,y)=F(x,x)$. \citet{CerouDelMoral2011} derived the following representation
of the second moment of $Z_{n}^{N}$,
\begin{equation}
\mathbb{E}\left[\left(Z_{n}^{N}\right)^{2}\right]=\mathbb{E}\left[\pi_{0}^{\otimes2}C_{\epsilon_{0}}Q_{1}^{\otimes2}C_{\epsilon_{1}}\cdots C_{\epsilon_{n}}Q_{n+1}^{\otimes2}(1)\right],
\label{eq:gamma-squared-representation_intro}
\end{equation}
where $C_{1}:=C$, $C_{0}:=Id$, $\left\{ \epsilon_{n}\right\} _{n\geq0}$ is a sequence
of i.i.d., $\left\{ 0,1\right\}$-valued random variables
with distribution
\[
\mathbb{\mathbb{P}}(\epsilon_{n}=1)=1-\mathbb{P}(\epsilon_{n}=0)=\frac{1}{N},
\]
and $Q_{n}^{\otimes2}$ is the two-fold tensor product of $Q_n$.

The main idea behind the Pairs algorithm is to identify, using \eqref{eq:gamma-squared-representation_intro}, certain nonnegative kernels $\mathbf{Q}_n ^{(N)}$ such that the second moment can be written
$$
\mathbb{E}\left[\left(Z_{n}^{N}\right)^{2}\right]=\pi_0^{\otimes2} \mathbf{Q}_1^{(N)}\cdots \mathbf{Q}_{n+1} ^{(N)}(1).
$$
The details of these kernels $\mathbf{Q}_n ^{(N)}$ are given in the Appendix. Observing the similarity with \eqref{eq:unbiased_Q}, to obtain the Pairs algorithm we shall derive a particle algorithm from the weighting functions and Markov kernels which are associated with $\mathbf{Q}_n^{(N)}$ in the same way as \eqref{eq:Q_G_front}-\eqref{eq:Q_M_front} are associated with $Q_n$, the result being the Pairs algorithm. Results for standard particle filters then transfer to the Pairs algorithm directly, which leads to our Theorem \ref{thm:Pairs-convergence-and-bound} below.

%We would like to point out the similarity between (\ref{eq:gamma-squared-representation_intro}) and the standard
%particle filter result for the first moment $Z_{n}^{N}$
%\begin{equation}
%\mathbb{E}\left[Z_{n}^{N}\right]=\pi_{0}Q_{1}\cdots Q_{n+1}(1).
%\label{eq:particle_filter_first_moment_estimate_intro}
%\end{equation}
%
%From (\ref{eq:gamma-squared-representation_intro}), by introducing a particle approximation, we can obtain
%a probability law
%\begin{equation}
%\mathbb{P}(\left.x_{n}\in dx_{n}\right|x_{0},...,x_{n-1}): = \prod_{i=1}^{M}\frac{\sum_{j=1}^{M}\mathbf{Q}_{n}^{(N)}(x_{n-1}^{j},dx_{n}^{i})}{\sum_{j=1}^{M}\mathbf{Q}_{n}^{(N)}(x_{n-1}^{j},E)},\quad n\geq1,
%\label{eq:pairs_law-1_intro}
%\end{equation}
%that would describe the dynamics in time $n$ of the $2M$ pairs of particles
%$x_{n}=\left\{x_{n}^{i}\right\}_{i=1}^{M}$, defined on the 2--fold product space.
%The operators $\mathbf{Q}_{n}^{(N)}$ are defined later in
%(\ref{eq:pairs_law-1}) in Appendix \ref{sec:Appendix}, but the reader could immediately
%recognize in (\ref{eq:pairs_law-1_intro}) the standard particle transition law.
%
%Starting from the representation in (\ref{eq:pairs_law-1_intro}) we will apply standard results
%for particle filters to prove, that the particle approximation (\ref{eq:pairs_law-1_intro}) of
%(\ref{eq:gamma-squared-representation_intro}) yields an unbiased and consistent estimate of
%$\mathbb{E}\left[\left(Z_{n}^{N}\right)^{2}\right]$. Moreover, we will prove, that this estimate
%will have some favourable properties in terms of its variability as a function of $n$.

\subsection{The algorithm and its properties}

In Algorithm \ref{alg:PairsAlgo} both $N\geq2$ and $M\geq1$ are
parameters. The computational cost of Algorithm \ref{alg:PairsAlgo}
is $O(M)$ per time step, uniformly in $N$, and the quantity $\Xi_{n}^{\left(N,M\right)}$
which it delivers can be considered an approximation to $\mathbb{E}\left[\left(Z_{n}^{N}\right)^{2}\right]$,
in the sense of Theorem \ref{thm:Pairs-convergence-and-bound} below.

\begin{algorithm}
\protect\caption{\label{alg:PairsAlgo}Pairs algorithm for approximating $\mathbb{E}\left[\left(Z_{n}^{N}\right)^{2}\right]$
using $M$ pair particles}

Initialization
\begin{itemize}
\item Sample pairs $\left\{ \check{X}_{0}^{i}\right\} _{i=1}^{M}\overset{i.i.d.}{\sim}q_{0}(\cdot)$,
$\left\{ \hat{X}_{0}^{i}\right\} _{i=1}^{M}\overset{i.i.d.}{\sim}q_{0}(\cdot)$
\item Compute weights $\left\{ W_{0}^{i}\right\} _{i=1}^{M}$ according
to
\[
W_{0}^{i}=\dfrac{1}{N}\dfrac{g_{0}(\check{X}_{0}^{i})^{2}\pi_{0}(\check{X}_{0}^{i})^{2}}{q_{0}(\check{X}_{0}^{i})^{2}}+\left(1-\dfrac{1}{N}\right)\dfrac{g_{0}(\check{X}_{0}^{i})g_{0}(\hat{X}_{0}^{i})\pi_{0}(\check{X}_{0}^{i})\pi_{0}(\hat{X}_{0}^{i})}{q_{0}(\check{X}_{0}^{i})q_{0}(\hat{X}_{0}^{i})},
\]

normalize weights $\widetilde{W}_{0}^{i}=\dfrac{W_{0}^{i}}{\sum_{k=1}^{M}W_{0}^{k}}$
and set $\Xi_{0}^{\left(N,M\right)}=\dfrac{1}{M}\sum_{i=1}^{M}W_{0}^{i}$.

\item Resample conditionally i.i.d. draws from $\left\{ \check{X}_{0}^{i},\hat{X}_{0}^{i}\right\} _{i=1}^{M}$
using the normalized weights $\left\{ \widetilde{W}_{0}^{i}\right\} _{i=1}^{M}$
to obtain a set of equally-weighted particles $\left\{ \check{\underline{X}}_{0}^{i},\hat{\underline{X}}_{0}^{i}\right\} _{i=1}^{M}$
\item For each $i$, set $\left(\check{X}_{0}^{i},\hat{X}_{0}^{i}\right)=\left(\check{\underline{X}}_{0}^{i},\hat{\underline{X}}_{0}^{i}\right)$,
compute $p_{0}^{i}=\left(1+(N-1)\dfrac{g_{0}(\hat{X}_{0}^{i})\pi_{0}(\hat{X}_{0}^{i})q_{0}(\check{X}_{0}^{i})}{g_{0}(\check{X}_{0}^{i})\pi_{0}(\check{X}_{0}^{i})q_{0}(\hat{X}_{0}^{i})}\right)^{-1}$
and sample $Y_{0}^{i}\sim Ber(p_{0}^{i})$. If $Y_{0}^{i}=1$, set
$\hat{X}_{0}^{i}=\check{X}_{0}^{i}$. Sample $\check{X}_{1}^{i}\sim q_{1}(\check{X}_{0}^{i},\cdot)$,
$\hat{X}_{1}^{i}\sim q_{1}(\hat{X}_{0}^{i},\cdot)$.
\end{itemize}
For $n\geq1$:
\begin{itemize}
\item Compute weights $\left\{ W_{n}^{i}\right\} _{i=1}^{M}$ according
to
%\[
%W_{n}^{i}=\dfrac{1}{N}\dfrac{g_{n}(\check{X}_{n}^{i})^{2}f(\check{X}_{n-1}^{i},\check{X}_{n}^{i})^{2}}{q_{n}(\check{X}_{n-1}^{i},\check{X}_{n}^{i})^{2}}+\left(1-\dfrac{1}{N}\right)\dfrac{g_{n}(\check{X}_{n}^{i})g_{n}(\hat{X}_{n}^{i})f(\check{X}_{n-1}^{i},\check{X}_{n}^{i})f(\hat{X}_{n-1}^{i},\hat{X}_{n}^{i})}{q_{n}(\check{X}_{n-1}^{i},\check{X}_{n}^{i})q_{n}(\hat{X}_{n-1}^{i},\hat{X}_{n}^{i})},
%\]
\begin{eqnarray*}
W_{n}^{i} & = & \dfrac{1}{N}\dfrac{g_{n}(\check{X}_{n}^{i})^{2}f(\check{X}_{n-1}^{i},\check{X}_{n}^{i})^{2}}{q_{n}(\check{X}_{n-1}^{i},\check{X}_{n}^{i})^{2}}\\
 & + & \left(1-\dfrac{1}{N}\right)\dfrac{g_{n}(\check{X}_{n}^{i})g_{n}(\hat{X}_{n}^{i})f(\check{X}_{n-1}^{i},\check{X}_{n}^{i})f(\hat{X}_{n-1}^{i},\hat{X}_{n}^{i})}{q_{n}(\check{X}_{n-1}^{i},\check{X}_{n}^{i})q_{n}(\hat{X}_{n-1}^{i},\hat{X}_{n}^{i})},
\end{eqnarray*}

normalize, $\widetilde{W}_{n}^{i}=\dfrac{W_{n}^{i}}{\sum_{k=1}^{M}W_{n}^{k}},$
and set $\Xi_{n}^{\left(N,M\right)}=\Xi_{n-1}^{\left(N,M\right)}\cdot\left(\dfrac{1}{M}\sum_{i=1}^{M}W_{n}^{i}\right)$

\item Resample conditionally i.i.d. draws from $\left\{ \check{X}_{n-1:n}^{i},\hat{X}_{n-1:n}^{i}\right\} _{i=1}^{M}$
using the normalized weights $\left\{ \widetilde{W}_{n}^{i}\right\} _{i=1}^{M}$
to obtain a set of equally-weighted particles $\left\{ \check{\underline{X}}_{n-1:n}^{i},\hat{\underline{X}}_{n-1:n}^{i}\right\} _{i=1}^{M}$
\item For each $i$, set $\left(\check{X}_{n-1:n}^{i},\hat{X}_{n-1:n}^{i}\right)=\left(\check{\underline{X}}_{n-1:n}^{i},\hat{\underline{X}}_{n-1:n}^{i}\right)$,
compute $p_{n}^{i}=\left(1+(N-1)\dfrac{g_{n}(\hat{X}_{n}^{i})f(\hat{X}_{n-1}^{i},\hat{X}_{n}^{i})q_{n}(\check{X}_{n-1}^{i},\check{X}_{n}^{i})}{g_{n}(\check{X}_{n}^{i})f(\check{X}_{n-1}^{i},\check{X}_{n}^{i})q_{n}(\hat{X}_{n-1}^{i},\hat{X}_{n}^{i})}\right)^{-1}$
and sample $Y_{n}^{i}\sim Ber(p_{n}^{i})$. If $Y_{n}^{i}=1$, set
$\hat{X}_{n}^{i}=\check{X}_{n}^{i}$. Sample $\check{X}_{n+1}^{i}\sim q_{n+1}(\check{X}_{n}^{i},\cdot)$,
$\hat{X}_{n+1}^{i}\sim q_{n+1}(\hat{X}_{n}^{i},\cdot)$.\end{itemize}
\end{algorithm}

\begin{thm}
\label{thm:Pairs-convergence-and-bound}If
\begin{equation}
\sup_{x}\dfrac{g_{0}(x)\pi_{0}(x)}{q_{0}(x)}<+\infty\quad\text{ and }\quad\sup_{x_{1},x_{2}}\dfrac{g_{n}(x_{2})f(x_{1},x_{2})}{q_{n}(x_{1},x_{2})}<+\infty,\quad\forall n\geq1,\label{eq:main_thm_weights_bounded}
\end{equation}
then for any $N\geq2$ and $n\geq0$,
\begin{eqnarray*}
\mathbb{E}\left[\Xi_{n}^{\left(N,M\right)}\right] & = & \mathbb{E}\left[\left(Z_{n}^{N}\right)^{2}\right],\quad\forall M\geq1,\\
\Xi_{n}^{\left(N,M\right)} & \stackrel[M\rightarrow\infty]{a.s.}{\longrightarrow} & \mathbb{E}\left[\left(Z_{n}^{N}\right)^{2}\right].
\end{eqnarray*}
If additionally for each $n\geq0$ there exist constants $0<w_{n}^{-}\leq w_{n}^{+}<+\infty$,
and for each $n\geq1$, constants $0<\epsilon_{n}^{-}\leq\epsilon_{n}^{+}<+\infty$
and a probability measure $\mu_{n}$ such that
\begin{eqnarray}
 &  & w_{0}^{-}\leq g_{0}(x)\pi_{0}(x)/q_{0}(x)\leq w_{0}^{+},\quad\forall x,\label{eq:main_thm_reg_weights}\\
 &  & w_{n}^{-}\leq g_{n}(x_{2})f(x_{1},x_{2})/q_{n}(x_{1},x_{2})\leq w_{n}^{+},\quad\forall x_{1},x_{2},n\geq1,\label{eq:main_thm_reg_weights-1}\\
 &  & \epsilon_{n}^{-}\mu_{n}(\cdot)\leq q_{n}(x,\cdot)\leq\epsilon_{n}^{+}\mu_{n}(\cdot),\quad\forall x,n\geq1,\label{eq:main_thm_reg_q}
\end{eqnarray}
then for any $N\geq2$ and $n\geq0$,
\[
M>\sum_{s=0}^{n+1}\Delta_{s}\quad\Rightarrow\quad\mathbb{E}\left[\left(\frac{\Xi_{n}^{(N,M)}}{\mathbb{E}\left[\left(Z_{n}^{N}\right)^{2}\right]}-1\right)^{2}\right]\leq\frac{4}{M}\sum_{s=0}^{n+1}\Delta_{s}
\]
where $\Delta_{s}:=\left(\frac{w_{s}^{+}w_{s+1}^{+}\epsilon_{s+1}^{+}}{w_{s}^{-}w_{s+1}^{-}\epsilon_{s+1}^{-}}\right)^{2}$
is independent of $M$ and $N$.
\end{thm}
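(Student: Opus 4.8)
The strategy is to reduce every claim to a known property of a standard particle filter. As indicated above, and with full details given in Appendix~\ref{sec:Appendix}, starting from the representation \eqref{eq:gamma-squared-representation_intro} and integrating out the coalescence indicators $\{\epsilon_k\}$ one produces nonnegative kernels $\mathbf{Q}_k^{(N)}$ on $\mathsf{X}^2$, $k\geq1$, with $\mathbb{E}[(Z_n^N)^2]=\pi_0^{\otimes2}\mathbf{Q}_1^{(N)}\cdots\mathbf{Q}_{n+1}^{(N)}(1)$, and such that the weighting functions $x\mapsto\mathbf{Q}_k^{(N)}(x,\mathsf{X}^2)$ and the Markov kernels $\mathbf{Q}_k^{(N)}(x,\cdot)/\mathbf{Q}_k^{(N)}(x,\mathsf{X}^2)$ are exactly the weights and moves appearing in Algorithm~\ref{alg:PairsAlgo}. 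Consequently $\Xi_n^{(N,M)}$ is precisely the $M$-particle normalizing constant estimate delivered by the particle filter of Algorithm~\ref{alg:standard-SMC-algorithm}, in the form applicable to a general sequence of nonnegative kernels, run with multinomial resampling on the Feynman--Kac model $(\mathbf{Q}_k^{(N)})_{k\geq1}$. Each assertion of the theorem then follows by invoking the analogous statement for that filter, with $\mathbf{Q}_k^{(N)}$ playing the role of $Q_k$.

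For the first two assertions, condition \eqref{eq:main_thm_weights_bounded} makes $\sup_x\mathbf{Q}_k^{(N)}(x,\mathsf{X}^2)$ finite for every $k$, so Algorithm~\ref{alg:PairsAlgo} is well defined. The identity $\mathbb{E}[\Xi_n^{(N,M)}]=\mathbb{E}[(Z_n^N)^2]$ for all $M\geq1$ is then the lack-of-bias property of the particle filter normalizing constant estimate, cf.\ \eqref{eq:unbiased_Q} and \citep[Ch.~9]{DelMoral04}, applied to the model $(\mathbf{Q}_k^{(N)})$, and $\Xi_n^{(N,M)}\to\mathbb{E}[(Z_n^N)^2]$ almost surely as $M\to\infty$ is the strong law of large numbers for particle approximations with bounded potentials. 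Both transfer verbatim.

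The relative variance bound is the substantive part. First I would verify that \eqref{eq:main_thm_reg_weights}--\eqref{eq:main_thm_reg_q} imply, for the model $(\mathbf{Q}_k^{(N)})$, the standard regularity hypotheses with constants independent of $N$: two-sided bounds on the potentials and a minorization of the moves. For the potentials, the weight of $\mathbf{Q}_k^{(N)}$ at a pair $(\check x,\hat x)$ has the form $N^{-1}w(\check x)^2+(1-N^{-1})\,w(\check x)\,w(\hat x)$, where $w$ is the one-component weight appearing in Algorithm~\ref{alg:PairsAlgo} at the corresponding step; by \eqref{eq:main_thm_reg_weights}--\eqref{eq:main_thm_reg_weights-1} this lies between the square of the matching $w^-$ and the square of the matching $w^+$ for every $N\geq2$, which is why $\Delta_s$ does not depend on $N$. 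For the moves, the Markov component of $\mathbf{Q}_k^{(N)}$ is a coalescence step followed by a pair of $q_k$-proposals; since the first coordinate of each proposal is copied from the previous state, it is over two consecutive steps that the pair of states becomes minorized, and \eqref{eq:main_thm_reg_q} then controls the relevant two-step transition densities with constants given by the $\epsilon_k^\pm$. With these inputs in place, I would apply a non-asymptotic relative variance bound for particle filter normalizing constant estimates under bounded potentials and minorization; such bounds are standard (see, e.g., \citet{CerouDelMoral2011}), and a version tailored to the constants $\Delta_s$ is established in Appendix~\ref{sec:Appendix}. Tracking the constants, the contribution of the $s$-th stage is the square of a product of the oscillation of the resampling weights there and the oscillation of the Feynman--Kac potential remaining beyond it, the latter collapsing -- by the two-step minorization -- to the weight oscillation at the next stage times $\epsilon_{s+1}^+/\epsilon_{s+1}^-$; after relabelling the indices this is exactly $\Delta_s$. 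Summing over the $n+2$ stages yields a bound of the form $\tfrac{c_1}{M}\sum_{s=0}^{n+1}\Delta_s+\tfrac{c_2}{M^2}\bigl(\sum_{s=0}^{n+1}\Delta_s\bigr)^2$ with absolute constants $c_1,c_2$; under the hypothesis $M>\sum_{s=0}^{n+1}\Delta_s$ the quadratic term is at most $\tfrac{c_2}{M}\sum_{s=0}^{n+1}\Delta_s$, and tracking the explicit constants from the particle filter bound gives the stated coefficient $4$.

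The main obstacle is the Appendix construction underlying the first paragraph: one must integrate out the i.i.d.\ indicators $\{\epsilon_k\}$ in \eqref{eq:gamma-squared-representation_intro} and recognize the result as a product of genuine nonnegative kernels $\mathbf{Q}_k^{(N)}$ whose associated weights and moves can be simulated at cost $O(M)$ -- the subtlety being that the coalescence operator $C$ ties together adjacent $Q^{\otimes2}$ factors, so that each $\mathbf{Q}_k^{(N)}$, and hence each $\Delta_s$, necessarily involves data at two consecutive time indices -- and then to push the regularity constants of \eqref{eq:main_thm_reg_weights}--\eqref{eq:main_thm_reg_q} through this construction precisely enough to land on the exact $\Delta_s$ rather than merely on some finite bound. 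Once that is done, the three conclusions are immediate consequences of the transfer principle of the first paragraph together with the cited particle filter results.
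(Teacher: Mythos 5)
Your proposal is correct and follows essentially the same route as the paper: you recognize $\Xi_n^{(N,M)}$ as the normalizing-constant estimate of a standard particle filter run on the Feynman--Kac model $(\mathbf{Q}_k^{(N)})_{k\geq1}$ obtained by averaging out the coalescence indicators, transfer lack-of-bias and strong consistency from the generic theory, and obtain the variance bound by verifying $N$-independent two-sided potential bounds together with a two-step minorization (forced by the copied first coordinate), which is exactly how the paper lands on $\mathbf{c}_p=\mathbf{d}_p\vee\mathbf{k}_p^+/\mathbf{k}_p^-=\Delta_p$ before invoking the relative-variance bound of \citet{CerouDelMoral2011}. The only cosmetic difference is that you sketch the internal structure of that variance bound, whereas the paper simply cites it as Proposition \ref{prop:Cerou-linearity}.
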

The proof of Theorem \ref{thm:Pairs-convergence-and-bound} is given
in Appendix \ref{sec:Appendix}. The conditions in (\ref{eq:main_thm_reg_weights})-(\ref{eq:main_thm_reg_q})
are fairly standard in the stability theory of particle filters, but
are rather strong: they rarely hold when $\mathsf{X}$ is an unbounded
subset of $\mathbb{R}^{d}.$ Attempting to establish similar results
under more realistic conditions, for example via the techniques of
\citet{whiteley2013stability}, seems to be a much more difficult
task, beyond the scope of the present work, and we leave a full investigation
of this matter to future research.

\subsection{Comparison to using i.i.d. replicates of $Z_{n}^{N}$}

A natural alternative to $\Xi_{n}^{(N,M)}$ as an approximation to
$\mathbb{E}\left[\left(Z_{n}^{N}\right)^{2}\right]$ is to use $M$
i.i.d. replicates $\left\{ Z_{n}^{N,j}\right\} _{j=1}^{M}$ of $Z_{n}^{N}$
and simple averaging,
\begin{equation}
\widetilde{\Xi}_{n}^{(N,M)}:=\frac{1}{M}\sum_{j=1}^{M}\left(Z_{n}^{N,j}\right)^{2}.\label{eq:Xi_tilde}
\end{equation}
The cost of computing $\widetilde{\Xi}_{n}^{(N,M)}$ is $O(MN)$ per
time step since it involves $M$ copies of Algorithm \ref{alg:standard-SMC-algorithm},
each using $N$ particles.

To illustrate why $\Xi_{n}^{(N,M)}$ is to be preferred over $\widetilde{\Xi}_{n}^{(N,M)}$
in terms of relative variance, consider for simplicity of exposition
the case: for $n\geq1$, $q_{n}(x,\cdot)=f(x,\cdot)=\pi_0 (\cdot)$; for
$n=0$, $q_{0}(\cdot)=\pi_{0}(\cdot)$; and for $n\geq0$, $g_{n}(x)=g(x)$. In this case, for all $n\geq0$, we have $\pi_{n}=\pi_{0}$ and in Algorithm \ref{alg:standard-SMC-algorithm},
$\{X_{n}^{i}\}_{i=1}^{N}$ are i.i.d. draws from $\pi_{0}$. Then
with $\pi_{p}^{N}(g):=N^{-1}\sum_{i=1}^{N}g(X_{p}^{i})$, $Z_{n}^{N}=\prod_{p=0}^{n}\pi_{p}^{N}(g)$,
and 
\begin{eqnarray}
\mathbb{E}\left[\left(\frac{\widetilde{\Xi}_{n}^{(N,M)}}{\mathbb{E}\left[\left(Z_{n}^{N}\right)^{2}\right]}-1\right)^{2}\right] & = & \frac{1}{M}\left(\frac{\mathbb{E}\left[\left(Z_{n}^{N}\right)^{4}\right]}{\mathbb{E}\left[\left(Z_{n}^{N}\right)^{2}\right]^{2}}-1\right)\nonumber \\
 & = & \frac{1}{M}\left(\prod_{p=0}^{n}\frac{\mathbb{E}\left[\pi_{p}^{N}(g)^{4}\right]}{\mathbb{E}\left[\pi_{p}^{N}(g)^{2}\right]^{2}}-1\right)\nonumber \\
 & = & \frac{1}{M}\left(C^{n+1}-1\right),\label{eq:Xi_tilde_bound}
\end{eqnarray}
where $C:=\mathbb{E}\left[\pi_{0}^{N}(g)^{4}\right]/\mathbb{E}\left[\pi_{0}^{N}(g)^{2}\right]^{2}\geq1$
by Jensen's inequality, with equality holding if and only if $\pi_{0}^{N}(g)$
is a.s. constant. So if $\pi_{0}^{N}(g)$ exhibits any stochastic
variability at all, in the sense that $C>1$, then $M$ must be scaled
exponentially fast with $n$ in order to control (\ref{eq:Xi_tilde_bound}),
cf. the linear-in-$n$ scaling in Theorem \ref{thm:Pairs-convergence-and-bound}.

\section{Numerical examples}\label{sec:examples}

We will illustrate the properties of the Pairs algorithm using two
numerical examples. The first, in Section \ref{sub:ar_1} is a simple toy example, based
on a $AR(1)$ auto-regressive process. The
second, in Section \ref{sub:LV}, is a more realistic example involving a Lotka - Volterra
system of ODEs, observed in noise. In Section \ref{sub:Strategies_Comparison} we investigated the performance of the pairs algorithm within a strategy for estimating Monte Carlo variance.

Throughout section \ref{sec:examples} we denote by $M^{\prime}$ a number of pairs used in the
pairs algorithm to obtain a reliable, benchmark estimate of the true quantity
$\mathbb{E}\left[\left(Z_{n}^{N}\right)^{2}\right]$.
%, i.e.
%we shall choose $M^{\prime}$ big enough, so that $\Xi_{n}^{\left(N,M^{\prime}\right)}
%\approx\mathbb{E}\left[\left(Z_{n}^{N}\right)^{2}\right]$
%according to Theorem \ref{thm:Pairs-convergence-and-bound}.

%With $M$ we denote the number of particles used to generate estimates of
%$\mathbb{E}\left[\left(Z_{n}^{N}\right)^{2}\right]$ with the
%help of the pairs algorithm. We later compare these estimates with estimates, obtained
%using standard MC approach that consists of simulating $\tilde{M}$ independent runs of
%Algorithm \ref{alg:standard-SMC-algorithm} and then averaging the squares of the outputs $Z_{n}^{N}$.

\subsection{$AR(1)$ example}\label{sub:ar_1}

%We will apply the standard Monte Carlo approach (please refer to \ref{eq:Xi_tilde})
%and the Pairs algorithm to estimate $\mathbb{E}\left[\left(Z_{n}^{N}\right)^{2}\right]$
%for an auto-regressive $AR(1)$ - based Hidden Markov model.

The signal of this model $\left(X_{n}\right)_{n\geq0}$ is an $AR(1)$
process, defined by $X_{n+1}=\alpha X_{n}+\epsilon_{n+1}$, where
we set $\alpha=0.5$, $\epsilon_{n}\sim\mathcal{N}(0,\sigma^{2})$,
$\sigma=10$. Assume that $g_{n}(x)=\exp\left(-x^{2}/100\right),\forall n$.
We will also assume that $q_{n}(x,\cdot)=f(x,\cdot)$, i.e. we will
propose using the actual signal density and we will set $q_{0}=\pi_{0}$, given by $X_{0}\sim\mathcal{N}(0,\sigma^{2}/(1-\alpha^{2}))$,
i.e. the process $\left(X_{n}\right)_{n\geq0}$ is stationary a priori.

In Figure \ref{fig:ar1-example} we compare two approaches for
estimating $\mathbb{E}\left[\left(Z_{n}^{N}\right)^{2}\right]$: using the Pairs algorithm, and the standard MC approach using i.i.d. replicates as in \eqref{eq:Xi_tilde}. We consider two sub--examples: the first one
is for comparatively small number of particles $N=50$, and the second
sub--example is with higher number of particles $N=250$. The plots show $\log(\Xi_{n}^{\left(N,M\right)})-\log(\Xi_{n}^{(N,M^{\prime})})$
for the Pairs algorithm and $\log(\widetilde{\Xi}_{n}^{(N,\tilde{M})})-\log(\Xi_{n}^{(N,M^{\prime})})$
for the standard MC approach (please refer to Algorithm \ref{alg:PairsAlgo} and
(\ref{eq:Xi_tilde})). Here we take $M^{\prime}=10^{6}$
so that $\Xi_{n}^{(N,M^{\prime})}$ is a reliable, benchmark value of
$\mathbb{E}\left[\left(Z_{n}^{N}\right)^{2}\right]$.

In the top left plot of Figure \ref{fig:ar1-example} we
have chosen $M=\tilde{M}=10^{4}$. For the equal cost plot on the
top right we have chosen $M=10^{4}$ and $\tilde{M}=2500$. Here,
by ``equal cost'' we mean  that $M$ and $\tilde{M}$ are chosen such that the execution times of the standard MC
algorithm and the Pairs algorithm are the same. The time parameter
$n$ varies from $0$ to $500$ in both plots and we plot $20$ independent
runs of both algorithms in order to compare their variability properties.

The second row of plots in Figure \ref{fig:ar1-example} consists
of plots for the case of larger number of particles $N=250$. Again,
in the bottom left we are comparing the case where $M=\tilde{M}=10^{4}$,
and in bottom right we are comparing the equal cost case where $M=10^{4}$
and $\tilde{M}=700$. The fact that $\tilde{M}$ is lower here than in the $N=50$ case reflects the fact that the cost of the standard MC approach is $O(\tilde{M}N)$ per time step, compared to $O(M)$ for the Pairs algorithm. We have plotted 20 independent runs for both
algorithms. 

\begin{figure*}
\begin{centering}
\includegraphics[scale=0.4]{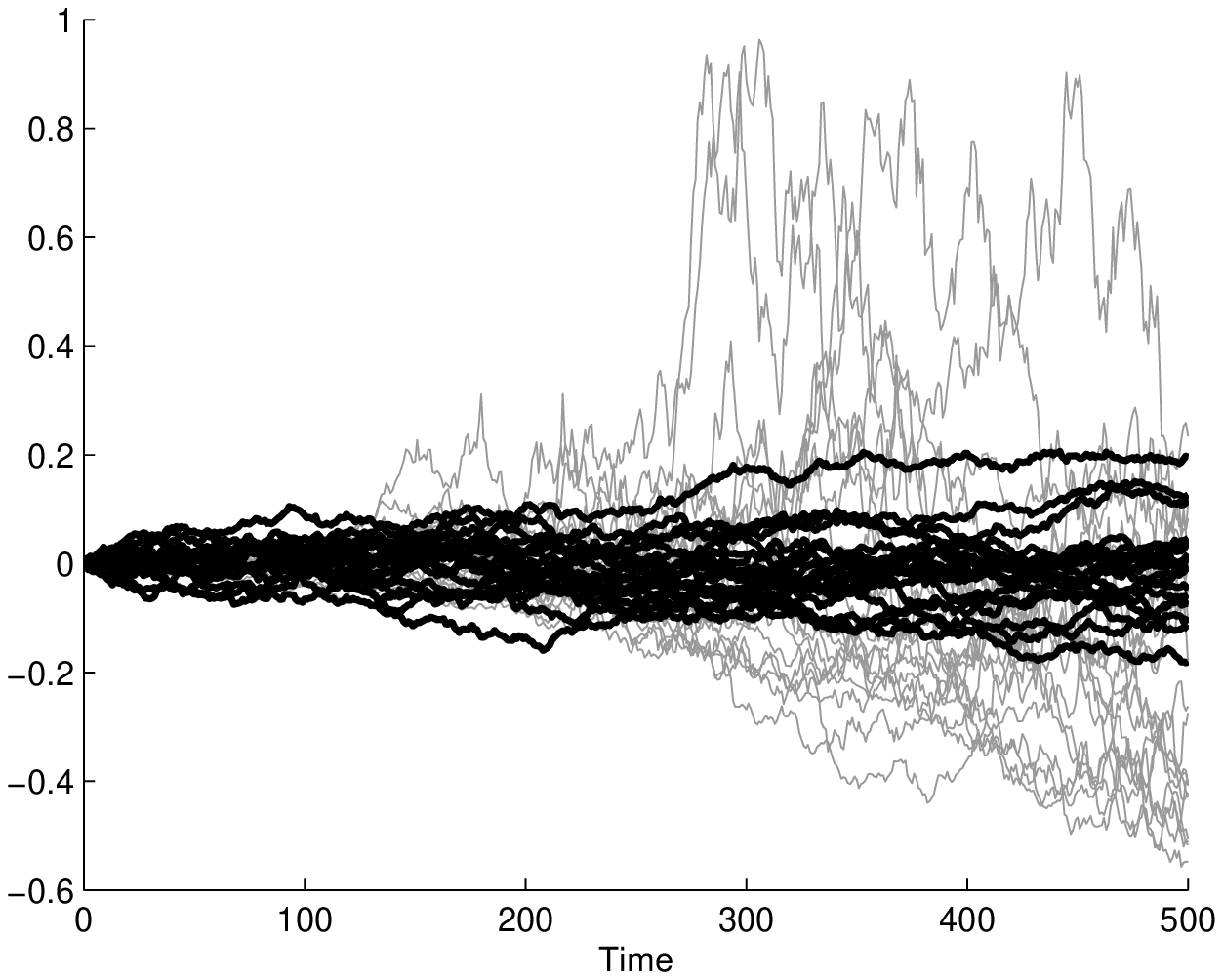}\includegraphics[scale=0.4]{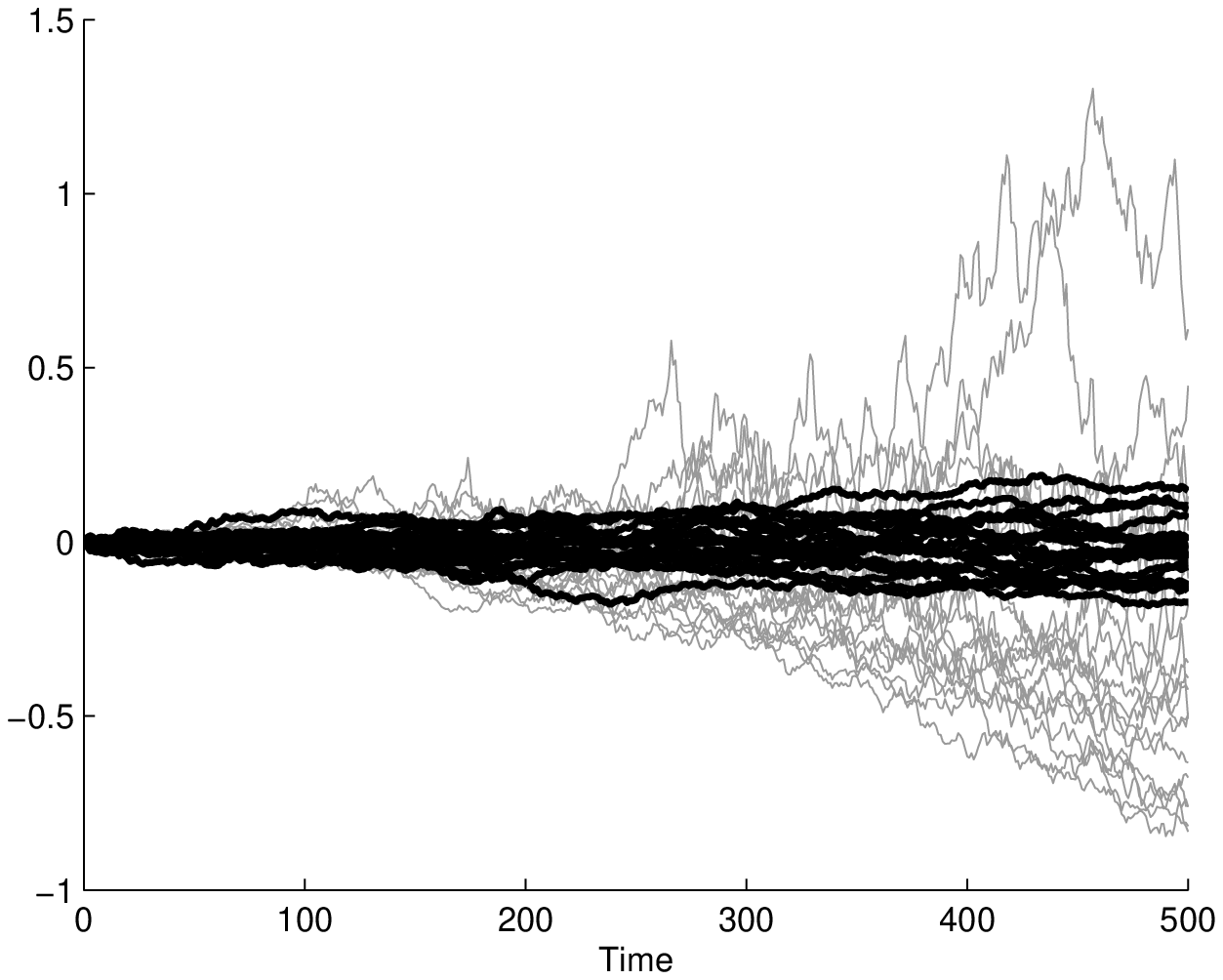}
\par\end{centering}

\begin{centering}
\includegraphics[scale=0.4]{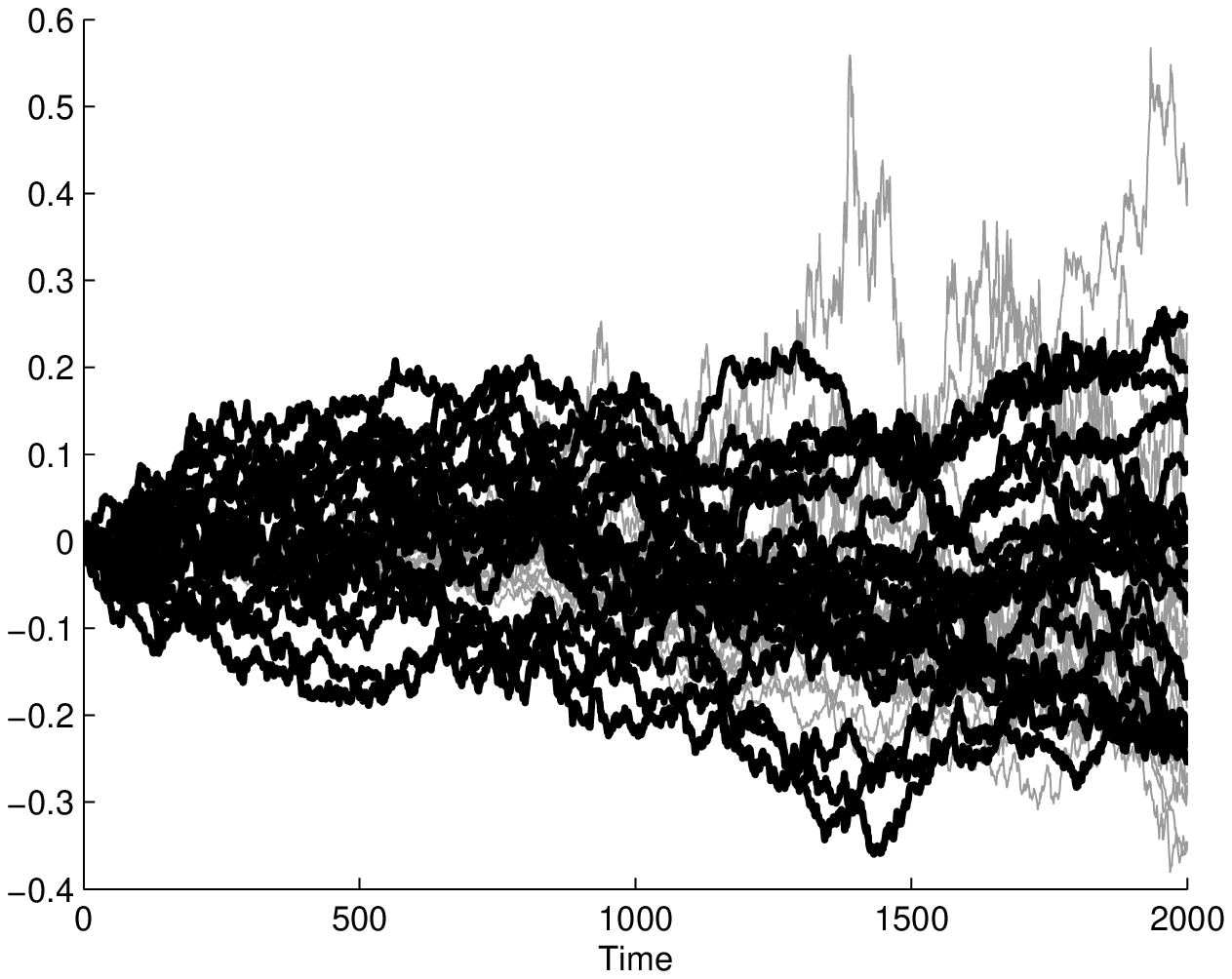}\includegraphics[scale=0.4]{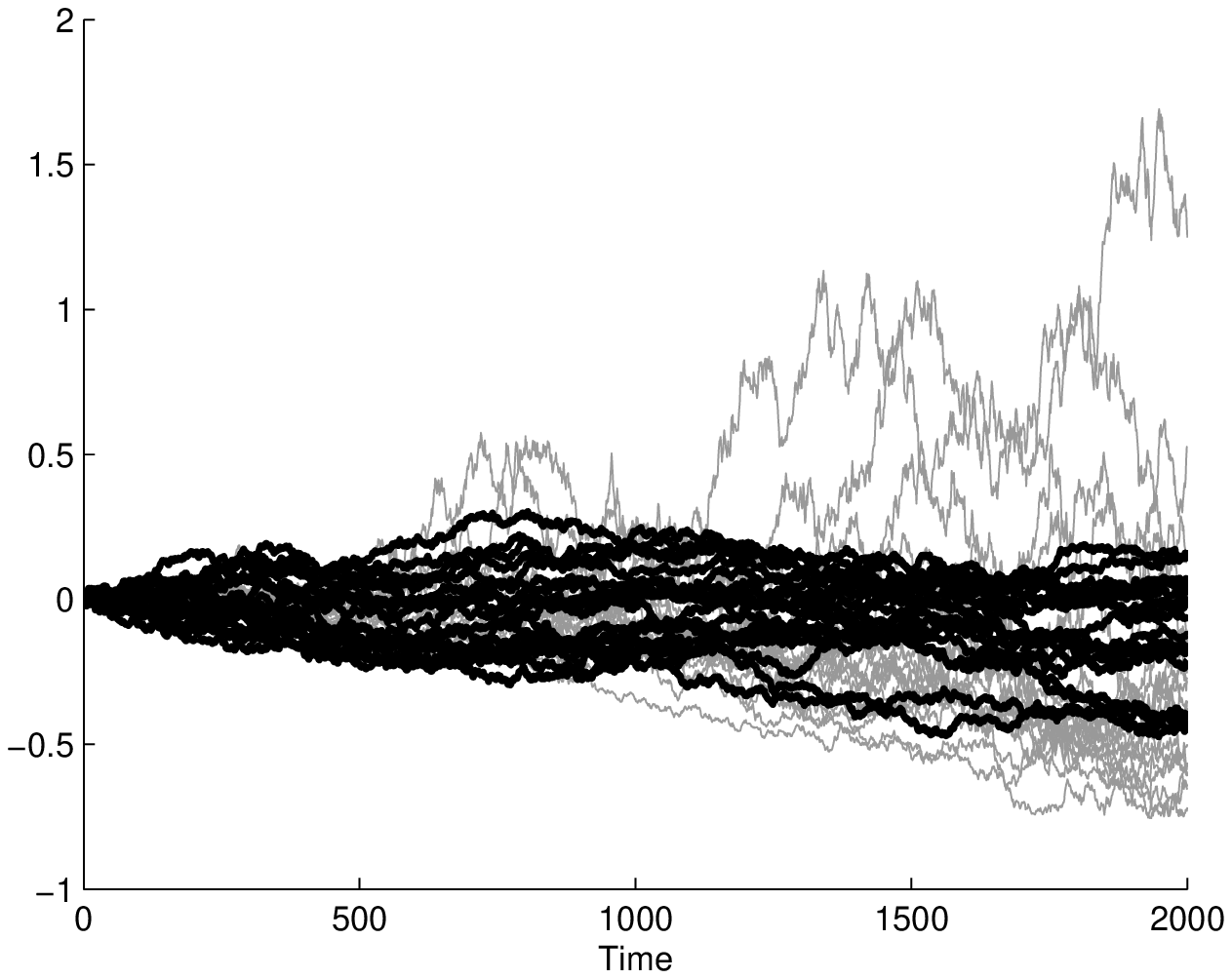}
\par\end{centering}

\protect\caption{\label{fig:ar1-example}$AR(1)$ example - The top two plots represent
the comparison of the estimates of $\mathbb{E}\left[\left(Z_{n}^{N}\right)^{2}\right]$
obtained using the standard MC approach (gray, thin lines) and the Pairs algorithm
(black, thick lines), where $N=50$ for the case of equal $M$ (top
left) and equal cost (top right) respectively. The bottom two represent
the comparison of the same two algorithms, but for the case, where
$N=250$ for the case of equal $M$ (bottom left) and equal cost (bottom
right)}
\end{figure*}

\begin{figure*}
\begin{centering}
\includegraphics[scale=0.4]{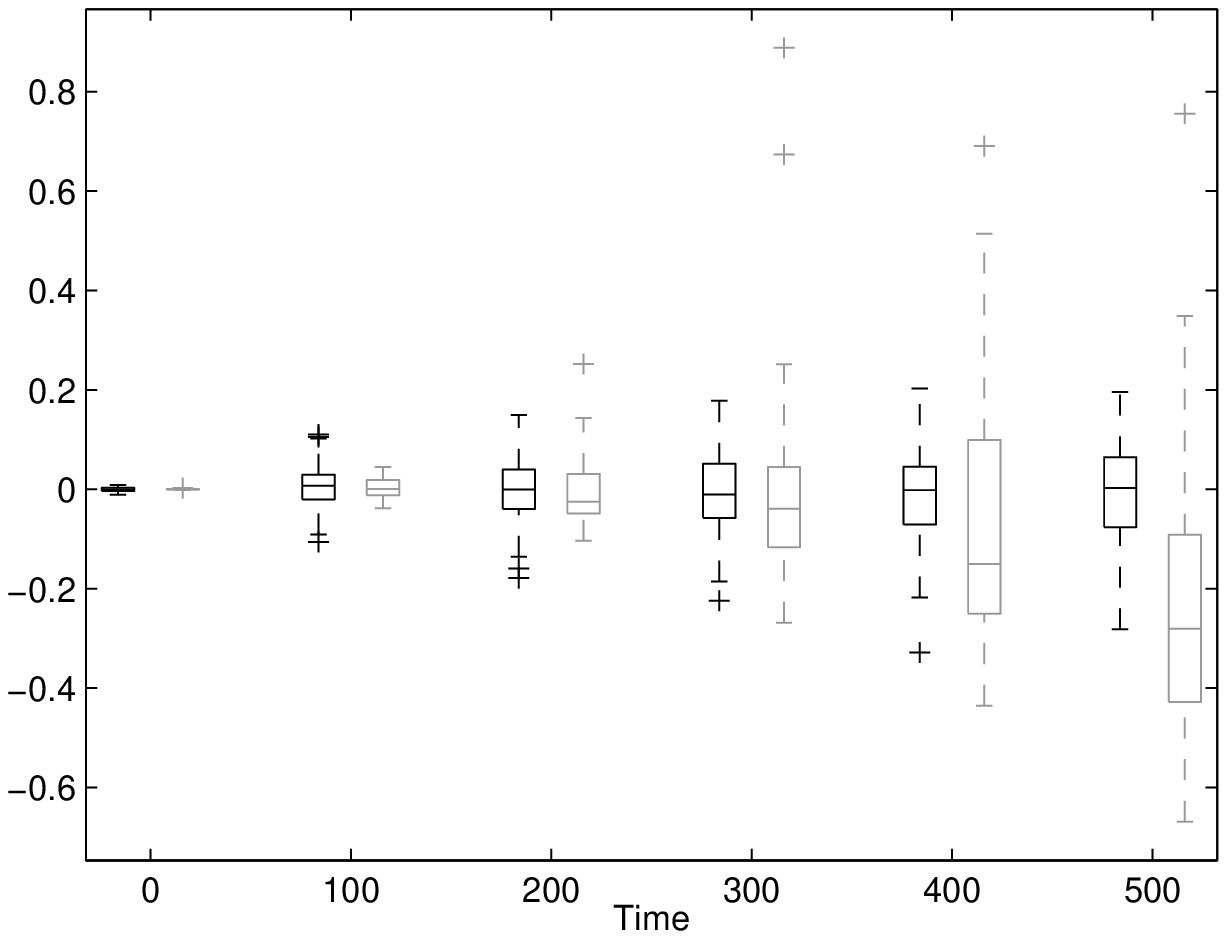}\includegraphics[scale=0.4]{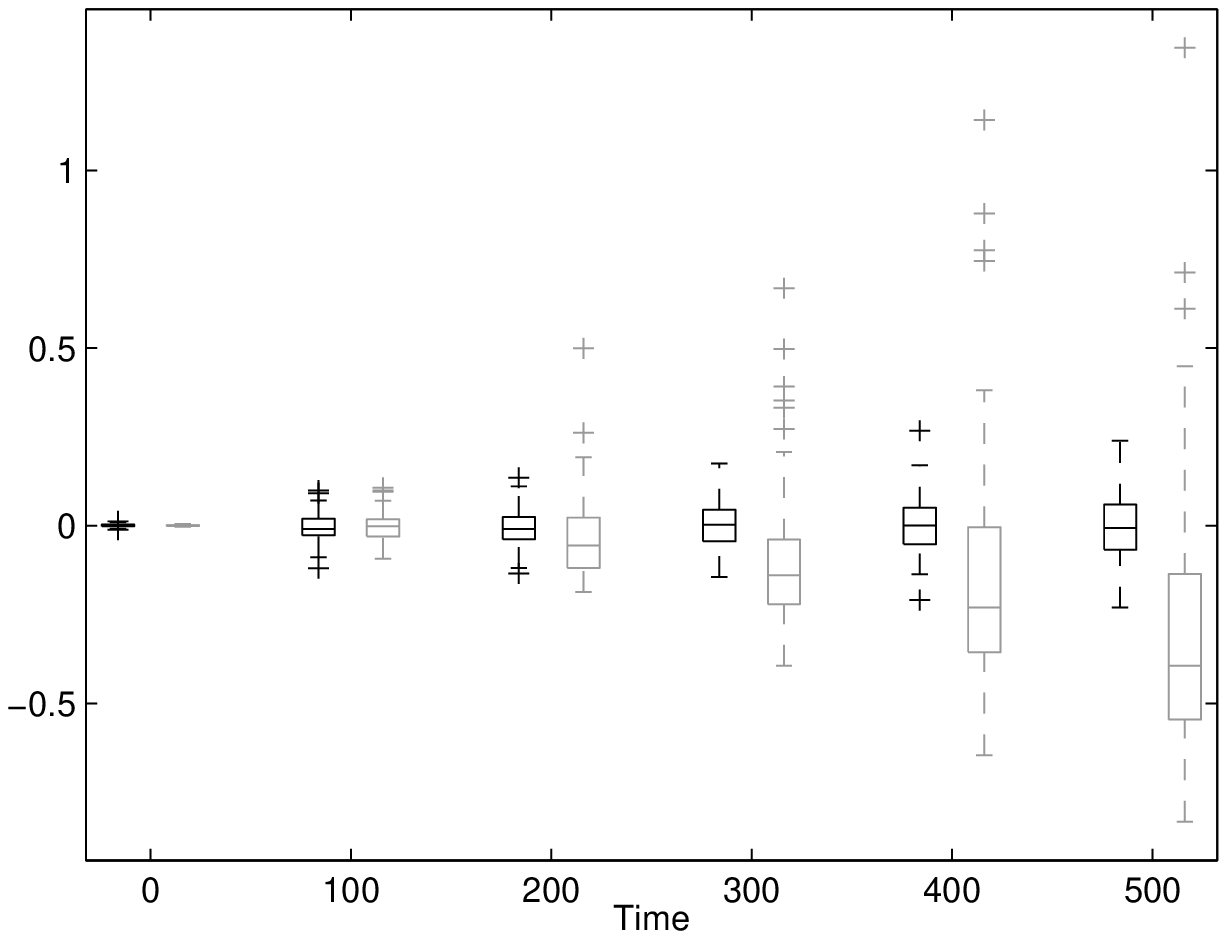}
\par
\end{centering}

\protect\caption{\label{fig:ar1-boxplots}$AR(1)$ example - boxplots using all of the 100 available simulated
paths for the case $N=50$ particles and equal $M$ and equal cost. The grey boxplots correspond to the MC approach, and black ones - to the
Pairs algorithm}
\end{figure*}

Figure \ref{fig:ar1-boxplots} shows boxplots based on 100 independent runs for both
algorithms for the case of equal $M=\tilde{M}=10^{4}$ and equal cost. We also have $N=50$. It is apparent that the estimates of $\mathbb{E}\left[\left(Z_{n}^{N}\right)^{2}\right]$
that we obtain using the Pairs algorithm have much less variability
than the estimates produced using the standard Monte Carlo approach with
i.i.d. replicates (especially for big values of the time parameter $n$).

%The negative median values which are visible in Figure \ref{fig:ar1-boxplots} can be explained by noting that since $\mathbb{E}[\widetilde{\Xi}_{n}^{(N,\tilde{M})}]=\mathbb{E}[(Z_n^N)^2]$, we have by Jensen's inequality,
%\begin{equation}
%\mathbb{E}\left[\log\widetilde{\Xi}_{n}^{(N,\tilde{M})}\right]-\log\mathbb{E}\left[\left(Z_{n}^{N}\right)^{2}
%\right]\leq 0.\label{eq:Jensens_ineq_down_trend}
%\end{equation}

%We would like to make the following comment regarding Figure \ref{fig:ar1-boxplots}. In it, we can
%clearly observe a downward trend in the median values for the standard MC approach, especially for
%large values of $n$. This could be explained by first observing that according to Jensen's inequality
%\begin{equation}
%\mathbb{E}\left[\log\widetilde{\Xi}_{n}^{(N,\tilde{M})}\right]-\log\mathbb{E}\left[\left(Z_{n}^{N}\right)^{2}
%\right]\leq 0.\label{eq:Jensens_ineq_down_trend}
%\end{equation}
%
%This means, that on average, we have $\log \widetilde{\Xi}_{n}^{(N,\tilde{M})}-\log\Xi_{n}^{(N,M^{\prime})}\leq0$,
%where we take into account the fact that in the plots we estimate $\mathbb{E}\left[\left(Z_{n}^{N}\right)^{2}\right]$ using $\Xi_{n}^{(N,M^{\prime})}$. Knowing this as well as the fact that the estimate $\widetilde{\Xi}_{n}^{(N,\tilde{M})}$ is quite unreliable (especially for big $n$), we see that indeed we should observe a downward trend in the box plots.

\subsection{Lotka - Volterra system example}\label{sub:LV}

In this section we illustrate the numerical performance of the pairs
algorithm in the context of a partially observed Langevin approximation
to Lotka-Volterra ODE system \citep{golightly2011bayesian}. The signal
process in the HMM is obtained from a discretization of the stochastic
differential equation (SDE) $dX_{t}=\alpha(X_{t},c)dt+\sqrt{\beta(X_{t},c)}dW_{t}$,
where $X_{t}=(X_{1,t},X_{2,t})$, $W_{t}=(W_{1,t},W_{,2t})$. Here
$W_{t}$ is a vector, each of the components of which is independent
standard Brownian motion, $c=(c_{1},c_{2},c_{3})$ are parameters
and $\alpha(x,c)$ and $\beta(x,c)$ are the drift and diffusion coefficients
given for the Lotka-Volterra system by
\[
\alpha(x,c)=\left(\begin{array}{c}
c_{1}x_{1}-c_{2}x_{1}x_{2}\\
c_{2}x_{1}x_{2}-c_{3}x_{2}
\end{array}\right),\qquad\beta(x,c)=\left(\begin{array}{cc}
c_{1}x_{1}+c_{2}x_{1}x_{2} & -c_{2}x_{1}x_{2}\\
-c_{2}x_{1}x_{2} & c_{2}x_{1}x_{2}+c_{3}x_{2}
\end{array}\right),
\]
with $x=(x_{1},x_{2})$.

We consider Euler discretization of the SDE with time resolution $\Delta t=1/m$
for some $m\geq1$, with the resulting process satisfying
\begin{equation}
X_{n+(j+1)\Delta t}-X_{n+j\Delta t}=\alpha(X_{n+j\Delta t},c)\Delta t+\sqrt{\beta(X_{n+j\Delta t},c)\Delta t}\chi_{j}
\end{equation}
for $n\in\mathbb{N}$ and $j\in\{0,1,\ldots,m-1\}$, where $\chi_{j}$
is a sequence of $\mathcal{N}(0,1)$--independent random variables.
The signal process in the HMM, denoted by $(\mathbf{X}_{n})_{n\geq0}$,
consists of a $\mathbb{R}^{2}$--valued random variable $X_{0}=(100,100)$
and for $n\geq1$ a $\mathbb{R}^{2m}$--valued random variable $\mathbf{X}_{n+1}=(X_{n+\Delta t},X_{n+2\Delta t},\ldots,X_{n+1})$.
The model for the observations is $Y_{n}=X_{n}+\varepsilon_{n}$,
where $\varepsilon_{n}\sim\mathcal{N}(0,\Sigma_{2\times2})$ , $\Sigma_{2\times2}=\sigma^{2}I_{2\times2}$,
where $I_{2\times2}$ is the $2\times2$ identity matrix. We also assume that we have observed the
process at integer times $n$. Following
\citet{golightly2011bayesian}, we consider two values of the observation
noise variance $\sigma^{2}=10$ and $\sigma^{2}=200$. We fix the
rate constants $c=(c_{1},c_{2},c_{3})=(0.5,0.0025,0.3)$, and we will
use $m=1$ for the discretization parameter.

We adopt the same approach to constructing the proposal kernels $(q_{n})_{n\geq1}$
suggested in \citet[Section 4.3]{golightly2011bayesian}, in which
$q_{n}(\mathbf{x}_{n},\mathbf{x}_{n+1})$ is chosen to be a tractable
Gaussian approximation to the conditional density of $\mathbf{x}_{n+1}$
given $\mathbf{x}_{n}$,$\boldsymbol{y}_{n+1}$. The proposal kernel is given by
\[
q_{n+1}(\mathbf{x}_{n},\mathbf{x}_{n+1})=\prod_{j=0}^{m-1}\psi_{n+(j+1)\Delta t}(x_{n+j\Delta t},x_{n+(j+1)\Delta t})
\]
where $\psi_{n+(j+1)\Delta t}(x_{n+j\Delta t},\cdot)=\mathcal{N}(\cdot;x_{n+j\Delta t}+a_{j}\Delta t,b_{j}\Delta t)$,
where $a_{j}=\alpha_{j}+\beta_{j}(\beta_{j}\Delta_{j}+\Sigma)^{-1}(y_{n+1}-(x_{n+j\Delta t}+\alpha_{j}\Delta_{j}))$,
$b_{j}=\beta_{j}-\beta_{j}(\beta_{j}\Delta_{j}+\Sigma)^{-1}\beta_{j}\Delta t$,
$\Delta_{j}=1-j\Delta t$, $\alpha_{j}=\alpha(x_{n+j\Delta t},c)$,
$\beta_{j}=\beta(x_{n+j\Delta t},c)$. We consider the process $(\mathbf{X}_{n},Y_{n})_{n\geq0}$
as a HMM, to which the particle algorithms are applied to.
\begin{figure*}
\begin{centering}
\includegraphics[scale=0.4]{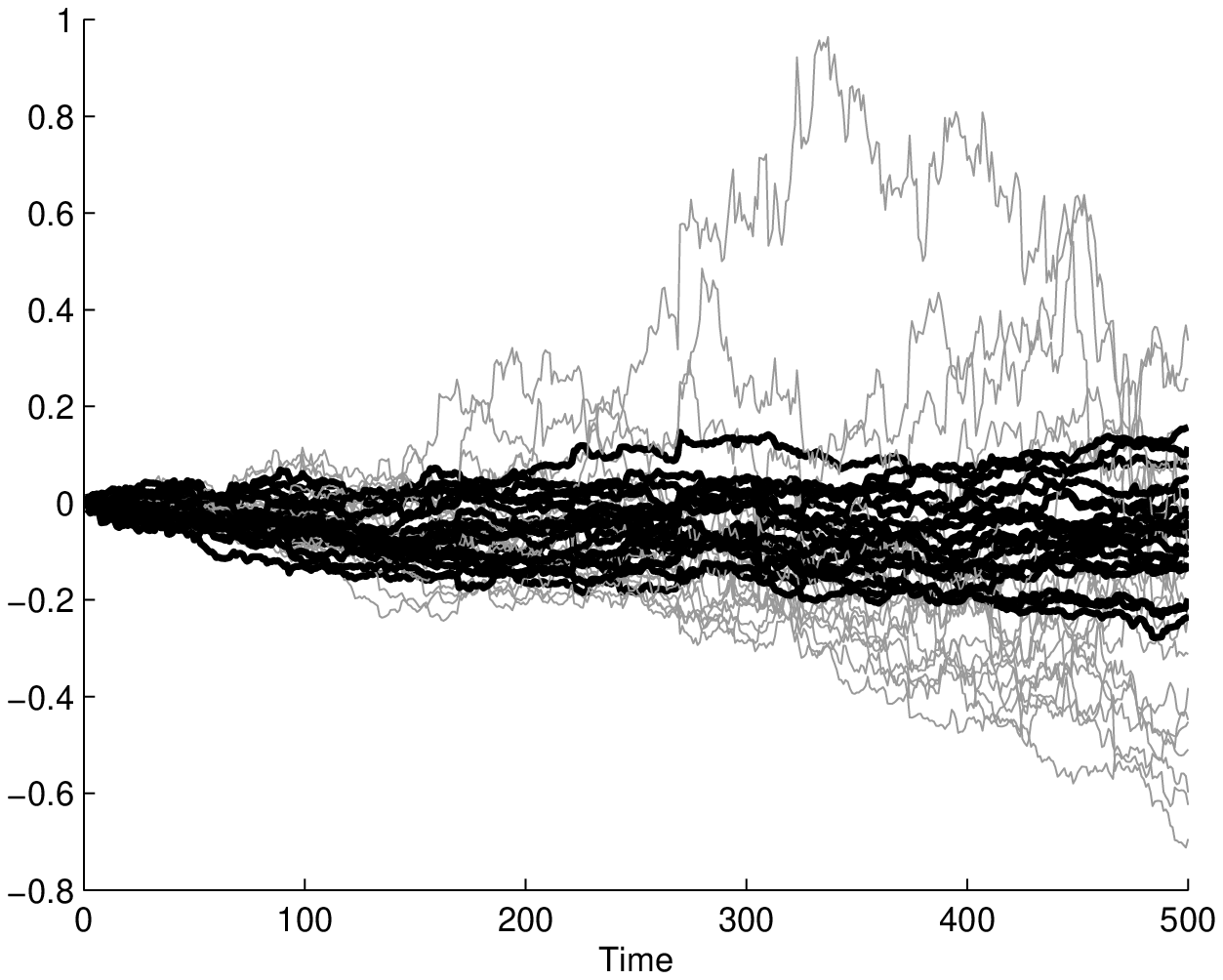}\includegraphics[scale=0.4]{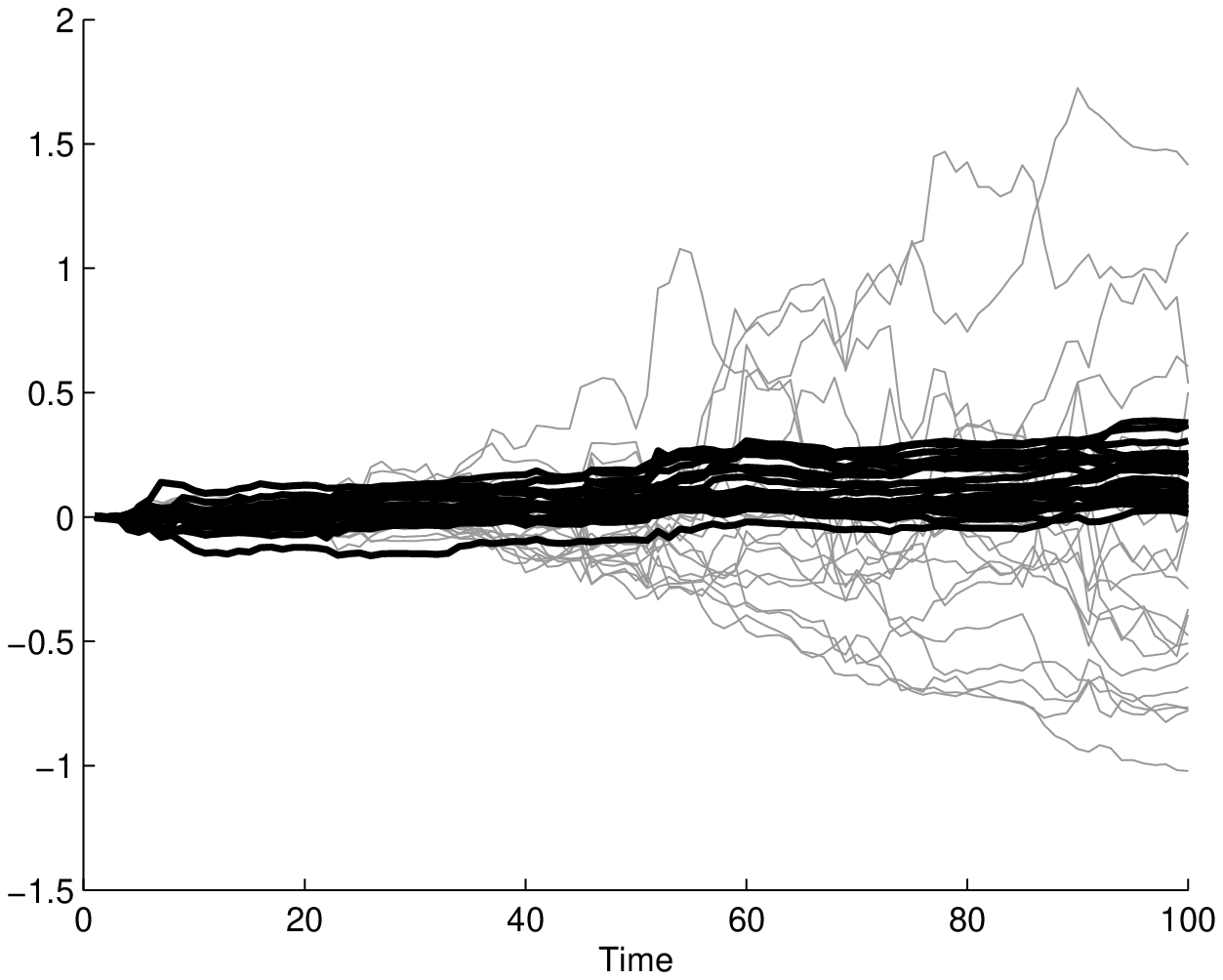}
\par
\end{centering}
\begin{centering}
\includegraphics[scale=0.4]{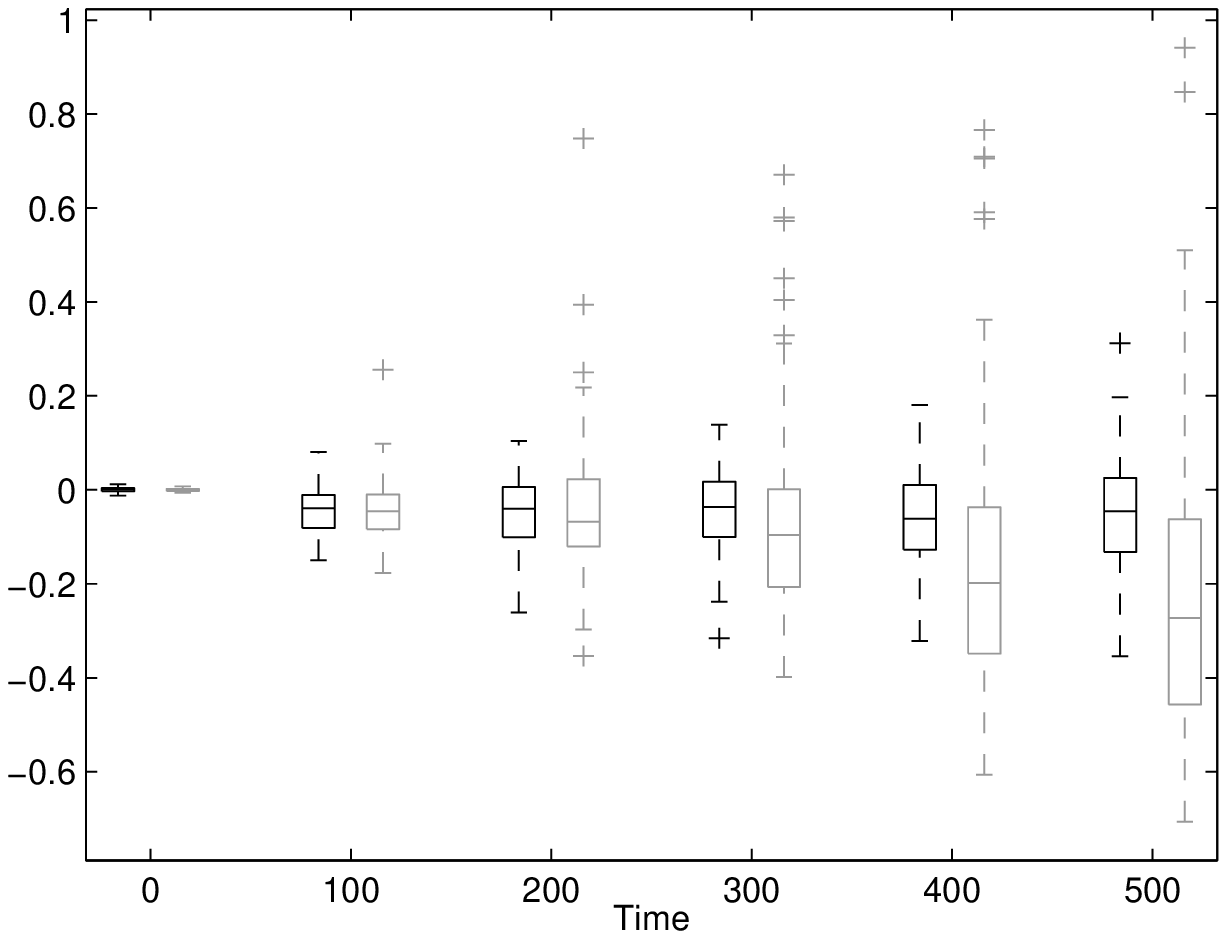}\includegraphics[scale=0.4]{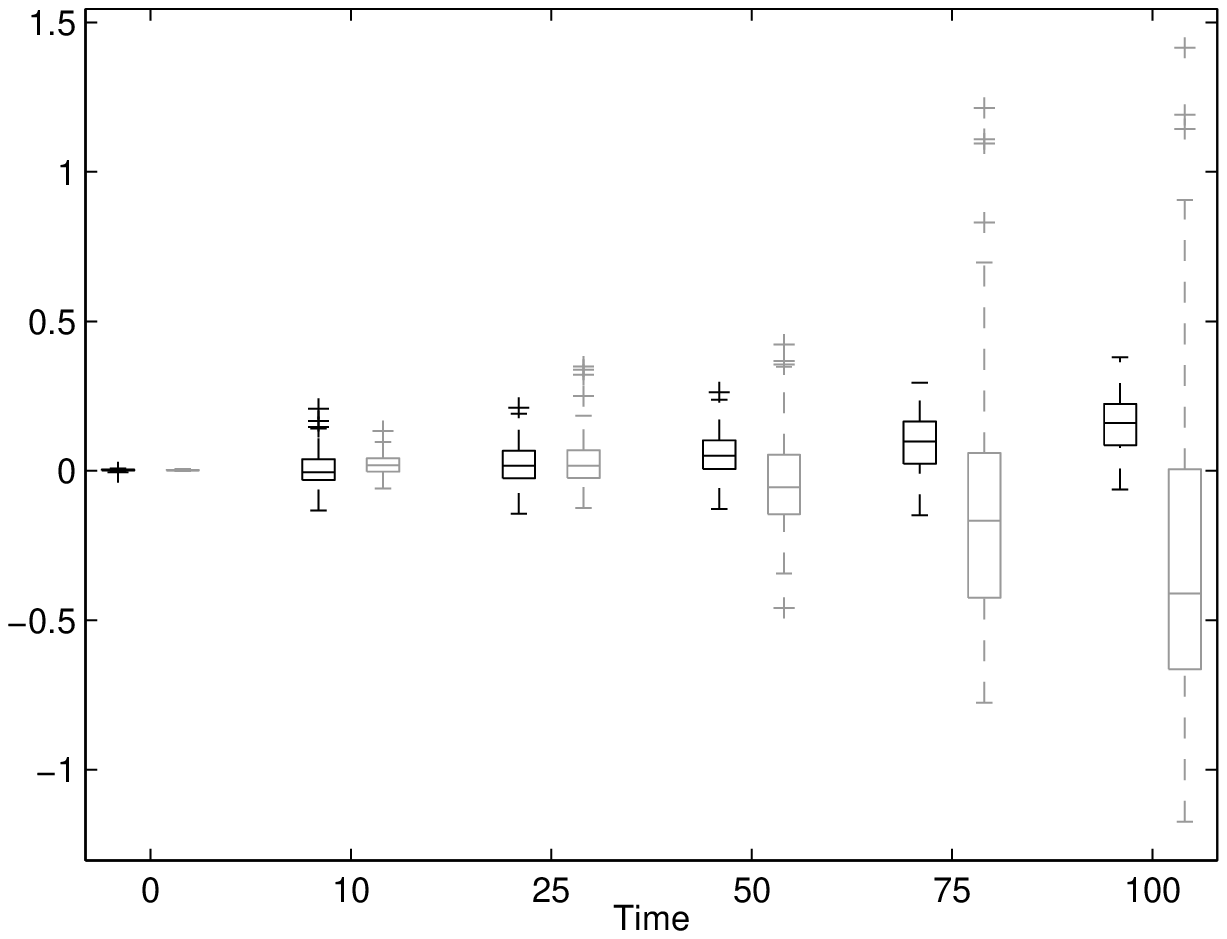}
\par
\end{centering}
\protect\caption{\label{fig:Lotka-Volterra-example}Lotka - Volterra example - comparison
of the estimates of $\mathbb{E}\left[\left(Z_{n}^{N}\right)^{2}\right]$ for the Pairs algorithm and
the standard Monte Carlo approach for the case of low observation noise ($\sigma^{2}=10$,
on the left) and large observation noise ($\sigma^{2}=200$, on the
right). The plots are for equal time cost. Again, grey corresponds to the MC approach and black
corresponds to the Pairs algorithm. The boxplots are based on 100 independent runs of the two
algorithms}
\end{figure*}

We first obtain a reliable benchmark value of $\mathbb{E}\left[\left(Z_{n}^{N}\right)^{2}\right]$,
denoted by $\Xi_{n}^{(N,M^{\prime})}$, using
a single run of the Pairs algorithm with $M^{\prime}=10^{6}$. We
compare $\Xi_{n}^{(N,M)}$ from the Pairs algorithm with
the simple Monte Carlo approximation $\widetilde{\Xi}_{n}^{(N,\tilde{M})}$
based on i.i.d. replicates, defined in (\ref{eq:Xi_tilde}) in Figure
\ref{fig:Lotka-Volterra-example} for two different values of the
observation noise - $\sigma^{2}=10$ and $\sigma^{2}=200$. In both
cases we plot again $\log(\Xi_{n}^{\left(N,M\right)})-\log(\Xi_{n}^{(N,M^{\prime})})$
for the Pairs algorithm and $\log(\widetilde{\Xi}_{n}^{(N,\tilde{M})})-\log(\Xi_{n}^{(N,M^{\prime})})$
for the standard MC approach.

On the top left of Figure \ref{fig:Lotka-Volterra-example} we have the
low noise example. In this example, we set $N=100$, $M=10^{4}$ and
$\tilde{M}=300$. On the top right plot we present the large noise case where
we set $N=100$, $M=10^{5}$ and $\tilde{M}=3000$ in order to equalize the computational cost.
Again, as in the previous example, we have plotted 20 independent runs for both algorithms.

In the two plots, and especially for large values of the time parameter $n$, the
estimate that we obtain with the help of the Pairs algorithm has much
less variability than the estimate calculated using standard Monte
Carlo with i.i.d. replicates. We can clearly see that with the increase
of the time parameter $n$, the rate of growth of the variability of
the estimates of $\mathbb{E}\left[\left(Z_{n}^{N}\right)^{2}\right]$
obtained using the Pairs algorithm is far less than the corresponding rate
for the standard Monte Carlo approach (using i.i.d. replicates). The observations
about the variability of the estimates in Figure \ref{fig:Lotka-Volterra-example}
are also supported by the corresponding boxplots, based on 100 independent runs
of the two algorithms.

\begin{table}
\begin{centering}
\begin{tabular}{ccccc}
\hline
\multirow{1}{*}{Time} & \multicolumn{2}{c}{Low noise} & \multicolumn{2}{c}{Large noise}\tabularnewline
\cline{2-5}
$n$ & $Z_{n}^{N}$ & $\Xi_{n}^{\left(N,M\right)}$ & $Z_{n}^{N}$ & $\Xi_{n}^{\left(N,M\right)}$\tabularnewline
\hline
$1$ & $7.4\times10^{-3}$ & $4.9\times10^{-5}$ & $2.21\times10^{-4}$ & $6.3\times10^{-8}$\tabularnewline
$5$ & $4.13\times10^{-15}$ & $1.6\times10^{-29}$ & $4.07\times10^{-21}$ & $3.9\times10^{-41}$\tabularnewline
$10$ & $2.51\times10^{-33}$ & $4.9\times10^{-66}$ & $4.59\times10^{-42}$ & $1.15\times10^{-82}$\tabularnewline
$25$ & $2.02\times10^{-80}$ & $4.9\times10^{-160}$ & $3.42\times10^{-100}$ & $2.69\times10^{-198}$\tabularnewline
$50$ & $2.23\times10^{-159}$ & $8.9\times10^{-318}$ & $2.81\times10^{-195}$ & $\leq10^{-324}$\tabularnewline
$100$ & $6.41\times10^{-317}$ & $\leq10^{-324}$ & $\leq10^{-324}$ & $\leq10^{-324}$\tabularnewline
\hline
\end{tabular}
\par\end{centering}

\raggedright{}\protect\caption{\label{tab:Normalizing-constants-and-errors}Estimates of $Z_{n}$
and $\mathbb{E}\left[\left(Z_{n}^{N}\right)^{2}\right]$ (using the
Pairs algorithm, hence $\Xi_{n}^{\left(N,M\right)}$ with $M=10^{6}$)
for the two cases of low and large observation noise for the Lotka-Volterra example}
\end{table}

Table \ref{tab:Normalizing-constants-and-errors} shows numerical
values for $Z_{n}^{N}$ and $\Xi_{n}^{\left(N,M\right)}$ for different
values of the time parameter $n$ for the Lotka--Volterra example. We see, that although the scale
of the values in Table \ref{tab:Normalizing-constants-and-errors}
is small, we still have, by Jensen's inequality, that $\mathbb{E}\left[\left(Z_{n}^{N}\right)^{2}\right]\geq\mathbb{E}\left[Z_{n}^{N}\right]^{2}$.
%because we actually have to compare these two numbers to obtain a
%sensible feeling of the relative variability of the estimate $Z_{n}^{N}$
%of $Z_{n}$.

\subsection{Estimating Monte Carlo variance}
\label{sub:Strategies_Comparison}

The purpose of this example is to show that the benefits of approximating $\mathbb{E}\left[\left(Z_{n}^{N}\right)^{2}\right]$ using the Pairs algorithm carry over to its use within a strategy for both estimating $Z_n$ and reporting Monte Carlo variance. As a benchmark for comparisons, we consider the following standard approach based on i.i.d. replicates of a particle filter.

\paragraph{\underline{MC strategy.}}
Run $\tilde{M}$ independent particle filters, each with $\tilde{N}$ particles, to give $\left\{Z_{n}^{\tilde{N},j}\right\}_{j=1}^{\tilde{M}}$. Then report:
\begin{itemize}
  \item $\widetilde{Z}_{n}^{(\tilde{N},\tilde{M})}=
  \frac{1}{\tilde{M}}\sum_{j=1}^{\tilde{M}}Z_{n}^{\tilde{N},j}$ as an estimate of $Z_{n}$
  \item $\frac{1}{\tilde{M}}
  \frac{1}{\tilde{M}-1}\sum_{j=1}^{\tilde{M}}\left(Z_{n}^{\tilde{N},j}-\widetilde{Z}_{n}^{
  (\tilde{N},\tilde{M})}\right)^{2}$ as an estimate of $\mathrm{Var}\left[\widetilde{Z}_{n}^{(\tilde{N},\tilde{M})}
  \right]$
\end{itemize}
The cost of this strategy is $O(\tilde{N} \tilde{M})$, and the variance estimate it delivers is a standard sample variance, thus unbiased. There are various ways that the MC strategy could be changed or augmented by using the Pairs algorithm. We consider the following:

%Next, a natural way to compare this MC approach with the Pairs algorithm is for the latter to run $M$ independent particle filters to obtain an estimate of the first moment $\mathbb{E}\left[Z_{n}^{N}\right]$ and one Pairs algorithm with $M$ pairs of particles to obtain an estimate of the second moment $\mathbb{E}\left[\left(Z_{n}^{N}\right)^{2}\right]$ (where again we have fixed $N\geq2$). We then use this estimate of $\mathbb{E}\left[\left(Z_{n}^{N}\right)^{2}\right]$ to calculate the variance of the estimate of $\mathbb{E}\left[Z_{n}^{N}\right]$ for the Pairs strategy. We summarize these two approaches as:
\paragraph{\underline{Pairs strategy.}}
Run $M$ independent particle filter algorithms, each with $N$ particles, to give $\left\{Z_{n}^{N,j}\right\}_{j=1}^{N}$. Additionally run one instance of the Pairs algorithm with parameters $(M,N)$, to give $\Xi_{n}^{\left(N,M\right)} $. Then report:
\begin{itemize}
  \item  ${Z}_{n}^{(N,M)}=
  \frac{1}{M}\sum_{j=1}^{M}Z_{n}^{N,j}$ as an estimate of $Z_{n}$
  \item  $\frac{1}{M-1}
  \left[\Xi_{n}^{\left(N,M\right)} - \left({Z}_{n}^{(N,M)}\right)^{2}\right]$ as an estimate of $\mathrm{Var}\left[{Z}_{n}^{(N,M)} \right]$
\end{itemize}
The cost of this strategy is $O(MN+M)$. So if for instance $N=\tilde{N}$ and $M=\tilde{M}$, the additional cost of the Pairs strategy beyond that of the MC strategy becomes negligible as $N$ grows.

To see that the variance estimate delivered by the Pairs strategy is unbiased, note that:
\begin{align*}
&\frac{M}{M-1}\mathbb{E}\left[\Xi_{n}^{(M,N)}-\left(Z_{n}^{(N,M)}\right)^{2}\right] \\& =\frac{M}{M-1}\left[\mathbb{E}\left[\Xi_{n}^{(M,N)}\right]-\frac{1}{M^{2}}\sum_{j=1}^{M}\mathbb{E}\left[\left(Z_{n}^{N,j}\right)^{2}\right]-\frac{1}{M^{2}}\sum_{i\neq j}^{M}\mathbb{E}\left[Z_{n}^{N,i}\right]\mathbb{E}\left[Z_{n}^{N,j}\right]\right]\\
 & =\frac{M}{M-1}\left[\mathbb{E}\left[\left(Z_{n}^{N}\right)^{2}\right]-\frac{1}{M}\mathbb{E}\left[\left(Z_{n}^{N}\right)^{2}\right]-\left(1-\frac{1}{M}\right)\mathbb{E}\left[Z_{n}^{N}\right]^{2}\right]\\
 & =\mathrm{Var}[Z_{n}^{N}]=M\mathrm{Var}\left[\frac{1}{M}\sum_{j=1}^{M}Z_{n}^{(N,j)}\right],
\end{align*}
where the second equality uses the lack-of-bias property of the Pairs algorithm from Theorem \ref{thm:Pairs-convergence-and-bound}, i.e. $\mathbb{E}\left[\Xi_{n}^{(M,N)}\right]=\mathbb{E}\left[\left(Z_{n}^{N}\right)^{2}\right]$.

Numerical results are shown in Figure \ref{fig:ar1-strategies-comparison}. In order to achieve better visual representation, we plot normalized estimates ${Z}_{n}^{(N,M)}/Z_{n}^{N^{\prime}}$ and $\widetilde{Z}_{n}^{(\tilde{N},\tilde{M})}/Z_{n}^{N^{\prime}}$ and their variances $\mathrm{Var}\left[{Z}_{n}^{(N,M)}\right]/\left(Z_{n}^{N^{\prime}}\right)^{2}$ and $\mathrm{Var}\left[\widetilde{Z}_{n}^{(\tilde{N},\tilde{M})}\right]/\left(Z_{n}^{N^{\prime}}\right)^{2}$, where $Z_{n}^{N^{\prime}}$ is a reliable, benchmark estimate of $Z_n$ obtained from a particle filter with $N^{\prime}=10^{6}$. We make comparisons with $N=\tilde{N}=50$ and $M=\tilde{M}=10^4$, with these settings in our implementation the additional cost of the Pairs strategy beyond that of the MC strategy was found to be insignificant, very similar results were obtained if the costs of the two strategies were exactly equalized.

In Figure \ref{fig:ar1-strategies-comparison} we compare the MC and Pairs strategies. The top left shows box plots of ${Z}_{n}^{(N,M)}/Z_{n}^{N^{\prime}}$ and $\widetilde{Z}_{n}^{(\tilde{N},\tilde{M})}/Z_{n}^{N^{\prime}}$ obtained from $1000$ independent realizations of the two strategies, for different values of $n$. The top right shows boxplots for the variance estimates, also from $1000$ realizations. We can clearly see that for increasing $n$ the estimates for the MC strategy exhibit larger variability than the estimates obtained from the Pairs strategy. On the bottom two plots of Figure \ref{fig:ar1-strategies-comparison} we compare the kernel density estimates for
of $\mathrm{Var}\left[{Z}_{n}^{(N,M)}\right]/\left(Z_{n}^{N^{\prime}}\right)^{2}$ and $\mathrm{Var}\left[\widetilde{Z}_{n}^{(\tilde{N},\tilde{M})}\right]/\left(Z_{n}^{N^{\prime}}\right)^{2}$ for $n=500$. On bottom left the estimated density is plotted, and on bottom right the $\log$ of the density is plotted, highlighting the heavier tails of the distribution for the MC strategy. The kernel density estimates in both plots were produced using a normal kernel function with bandwidths 0.06 (Pairs strategy) and 0.9 (MC strategy). The density estimates indicated a more concentrated distribution for the Pairs strategy (thick, black line) than for the MC strategy (grey line).

\begin{figure*}
\begin{centering}
\includegraphics[scale=0.4]{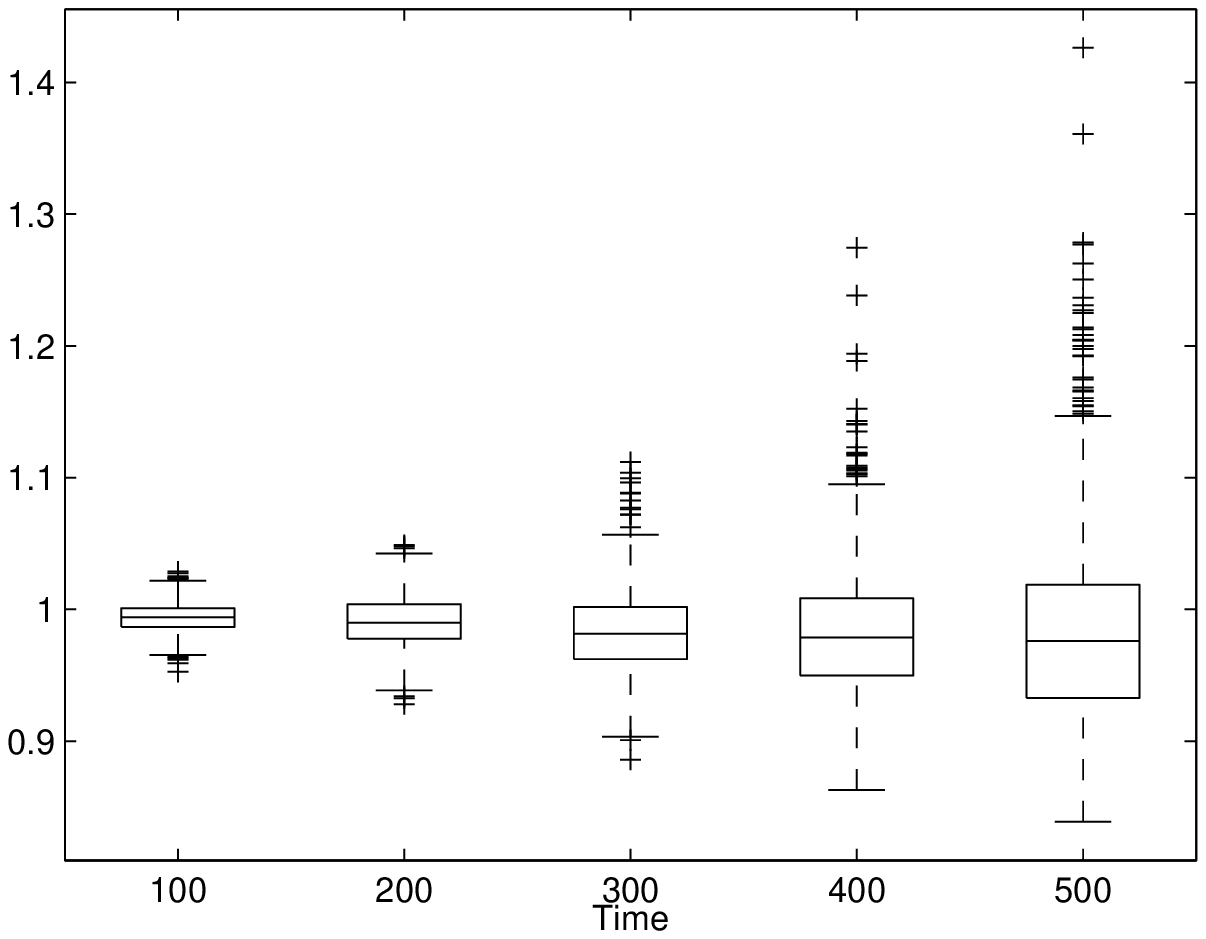}\includegraphics[scale=0.4]{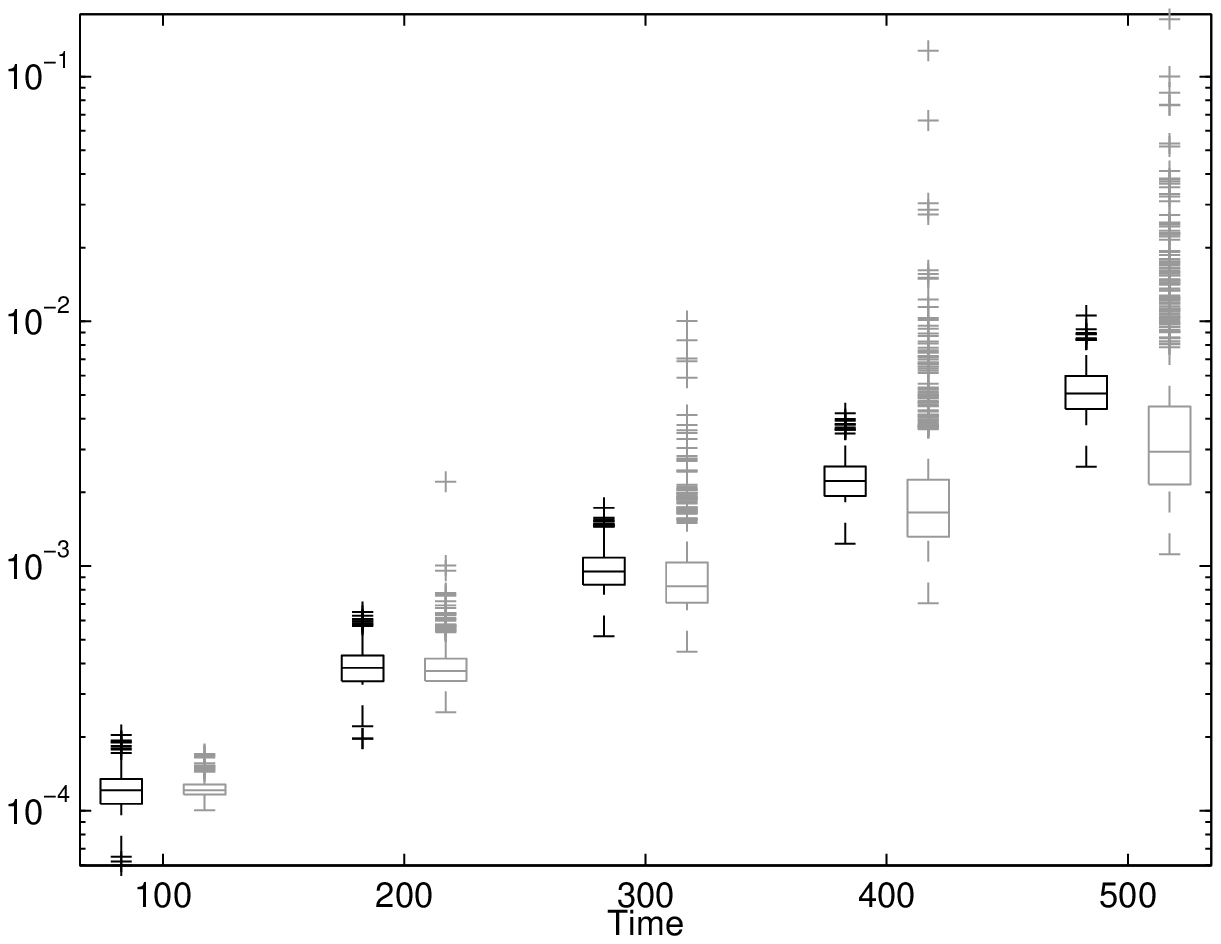}
\par\end{centering}

\begin{centering}
\includegraphics[scale=0.4]{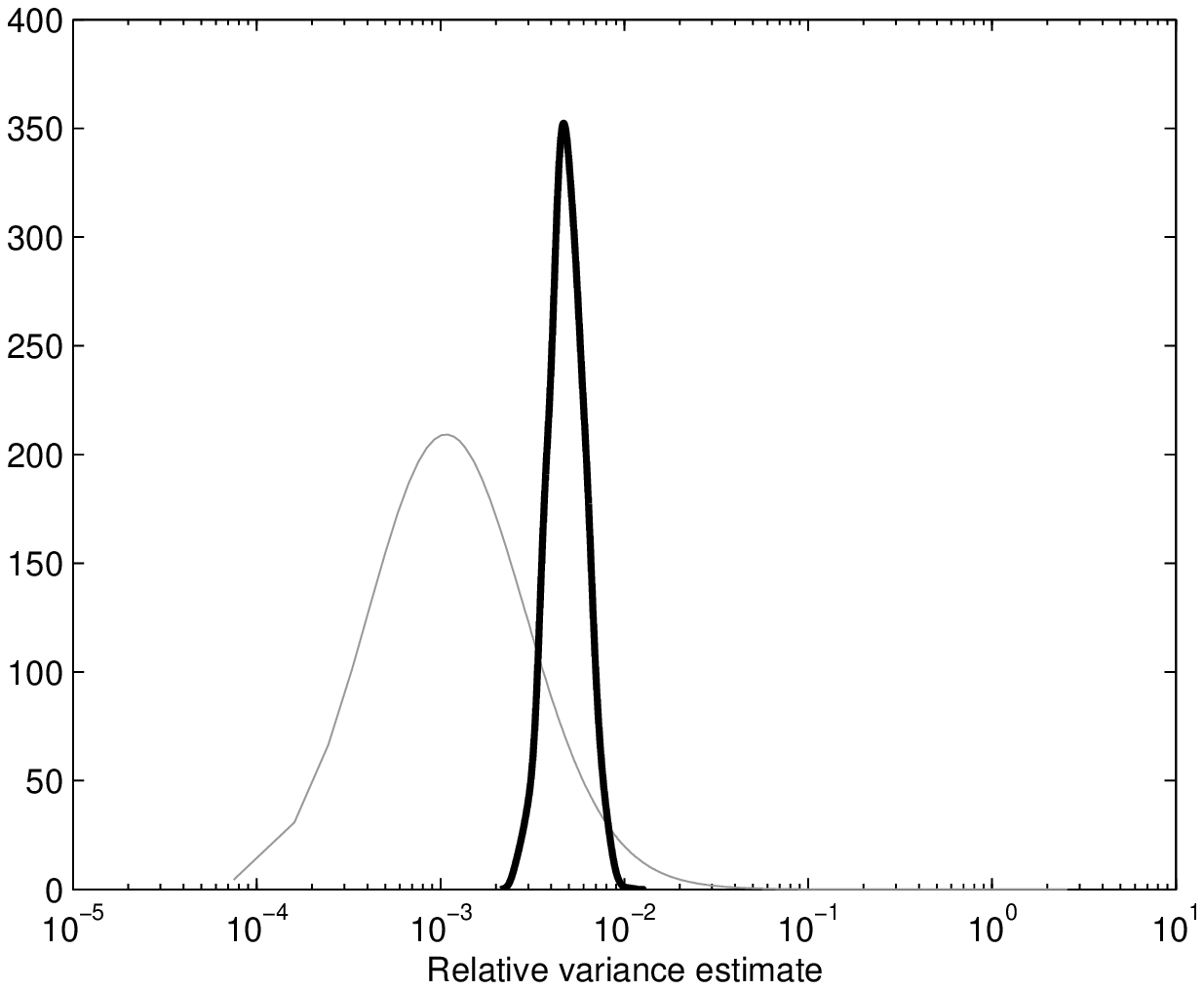}\includegraphics[scale=0.4]{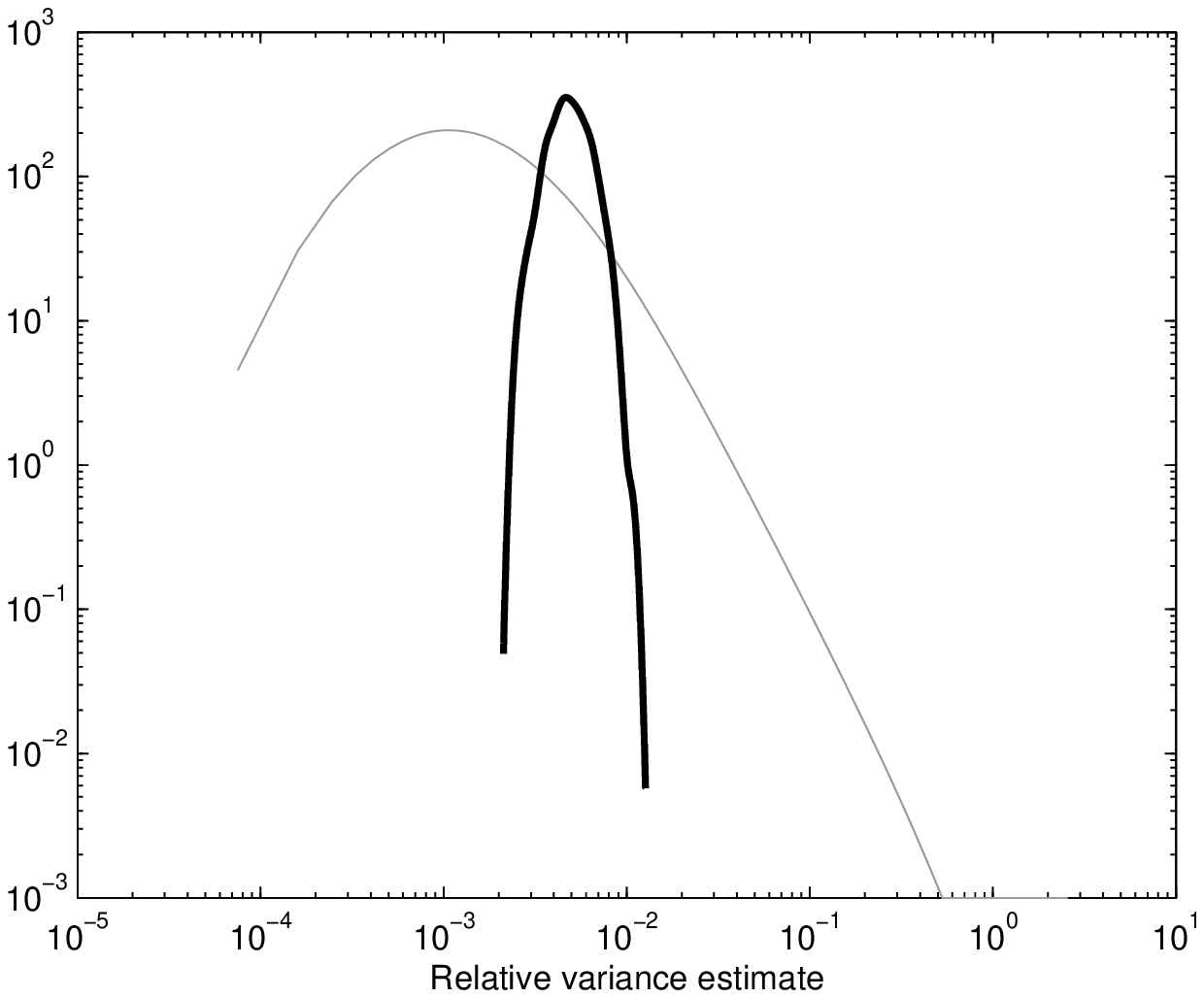}
\par\end{centering}

\protect\caption{\label{fig:ar1-strategies-comparison}$AR(1)$ example - comparison of Pairs and MC strategies for $\tilde{N}=N=50$ particles and $\tilde{M}=M=10^4$. On top left we plot the estimates of $\mathbb{E}\left[Z_{n}^{N}\right]/Z_{n}^{N^{\prime}}$ for both MC and Pairs strategies (which are equal). On top right plot we compare the two strategies in terms of estimates of the relative variance $Var\left[{Z}_{n}^{(N,M)} \right]/\left(Z_{n}^{N^{\prime}}\right)^{2}$ and $Var\left[\widetilde{Z}_{n}^{(\tilde{N},\tilde{M})}\right]/\left(Z_{n}^{N^{\prime}}\right)^{2}$ respectively (the $y$--axis is on a $\log$--scale). On the bottom left (right) plot we compare the kernel density estimates of the pdf ($\log$--pdf) of the relative variance for the two strategies for time $n=500$ (the $y$--axis of the bottom right plot is on a $\log$--scale). For the bottom two plots the $x$--axis is on a $\log$--scale}
\end{figure*}

\appendix

\section{Auxiliary definitions, results and proof of
Theorem \ref{thm:Pairs-convergence-and-bound}}
\label{sec:Appendix}

This appendix is structured as follows. After introducing notation
in \ref{sub:Notation-and-conventions}, \ref{sub:A-generic-particle}
introduces a generic particle system, of which we show Algorithm \ref{alg:standard-SMC-algorithm}
to be a special case. The account of this generic particle system
and some of its properties is needed in order to derive an associated
pairs particle system in \ref{sub:The-Pairs-particle-system}, of
which we show Algorithm \ref{alg:PairsAlgo} to be a special case.
The proof of Theorem \ref{thm:Pairs-convergence-and-bound}, in \ref{sub:Proof-of-Theorem},
rests on the key observation that the pairs particle system is also
an instance of the generic particle system of \ref{sub:A-generic-particle},
allowing properties of the latter to be transferred to the Pairs algorithm.

\subsection{Notation and conventions}
\label{sub:Notation-and-conventions}

For a measurable space $(E,\mathcal{E})$, denote by $\mathcal{B}_{b}(E)$
the set of all $\mathbb{R}$-valued, measurable and bounded functions
on $E$, and by $\mathcal{M}(E)$ and $\mathcal{P}(E)$ the sets of
respectively measures and probability measures on $\mathcal{E}$.
For $\mu\in\mathcal{M}(E)$ and $\varphi\in\mathcal{B}_{b}(E)$ we
write $\mu(\varphi):=\int_{E}\varphi(x)\mu(dx)$. For a non-negative
integral kernel $L:E\times\mathcal{E}\rightarrow[0,\infty)$, $\varphi\in\mathcal{B}_{b}(E)$
and $\mu\in\mathcal{M}(E)$, we write $L(\varphi)(x):=\int_{E}L(x,dy)\varphi(y)$,
$\left(\mu L\right)(\cdot):=\int_{E}\mu(dx)L(x,\cdot)$ and for two
such kernels, $L$ and $M$, we write their composition as $(LM)(x,\cdot):=\int_{E}L(x,dx^{\prime})M(x,\cdot)$.
We write two-fold tensor product measures and functions as respectively
$\mu^{\otimes2}\in\mathcal{M}(E^{2})$ and $\varphi\otimes\varphi\in\mathcal{B}_{b}(E\times E)$.
For $\varphi\in\mathcal{B}_{b}(E\times E)$ we write the tensor product
integral operator $L^{\otimes2}(\varphi)(x,x^{\prime}):=\int_{E\times E}L(x,dy)L(x^{\prime},dy^{\prime})\varphi(y,y^{\prime})$.
We introduce also a measurable space $(E_{0},\mathcal{E}_{0})$ and
use exactly similar notation when dealing with functions, measures
and kernels on $(E_{0},\mathcal{E}_{0})$, and kernels between $(E_{0},\mathcal{E}_{0})$
and $(E,\mathcal{E})$.

\subsection{A generic particle system}
\label{sub:A-generic-particle}

For each $n\geq2$ let $Q_{n}:E\times\mathcal{E}\rightarrow(0,\infty)$
be an integral kernel such that for each $x\in E$, $Q_{n}(x,\cdot)$
is a finite measure on $(E,\mathcal{E})$. Then introduce
\begin{equation}
M_{n}:\;(x,A)\in E\times\mathcal{E}\;\mapsto\;\frac{Q_{n}(x,A)}{Q_{n}(x,E)}\in[0,1];\quad G_{n-1}:\; x\in E\;\mapsto\; Q_{n}(x,E)\in(0,\infty),\label{eq:M_and_G_from_Q}
\end{equation}
which are respectively a Markov kernel and a measurable, bounded,
strictly positive function. Let also $Q_{1}:E_{0}\times\mathcal{E}\rightarrow(0,\infty)$
be a finite integral kernel, with $M_{1}$ and $G_{0}$ defined similarly
to (\ref{eq:M_and_G_from_Q}).

For $0\leq p\leq n$ define $Q_{p,n}=Q_{p+1}\cdots Q_{n}$ with $Q_{n,n}:=Id$.
Fix some $\eta_{0}\in\mathcal{P}(E_{0})$, and define the measures
$\left(\gamma_{n}\right)_{n\geq0}$ and probability measures $\left(\eta_{n}\right)_{n\geq1}$
by $\gamma_{0}:=\eta_{0}$ and
\begin{equation}
\gamma_{n}(\cdot):=\eta_{0}Q_{0,n}(\cdot),\quad\eta_{n}(\cdot):=\frac{\gamma_{n}(\cdot)}{\gamma_{n}(E)},\quad n\geq1.\label{eq:gamma_n_defn}
\end{equation}
With these objects, and for some fixed $N\geq1$, we associate a particle
process $\left(\zeta_{n}\right)_{n\geq0}$ as follows. The initial
configuration $\zeta_{0}=\left\{ \zeta_{0}^{i}\right\} _{i=1}^{N}$
are independent and identically distributed according to $\eta_{0}$,
and the evolution of $\zeta_{n}=\left\{ \zeta_{n}^{i}\right\} _{i=1}^{N}$
is described by the following probability law
\begin{eqnarray}
\mathbb{P}(\left.\zeta_{n}\in d\zeta_{n}\right|\zeta_{0},...,\zeta_{n-1}): & = & \prod_{i=1}^{N}\frac{\sum_{j=1}^{N}Q_{n}(\zeta_{n-1}^{j},d\zeta_{n}^{i})}{\sum_{j=1}^{N}Q_{n}(\zeta_{n-1}^{j},E)}\label{eq:generic_law}\\
 & = & \prod_{i=1}^{N}\frac{\sum_{j=1}^{N}G_{n-1}(\zeta_{n-1}^{j})M_{n}(\zeta_{n-1}^{j},d\zeta_{n}^{i})}{\sum_{j=1}^{N}G_{n-1}(\zeta_{n-1}^{j})},\quad n\geq1,\nonumber
\end{eqnarray}
where $d\zeta_{n}$ is to be understood as an infinitesimal neighborhood
of a point $(\zeta_{n}^{1},...,\zeta_{n}^{N})$.

Let us define the empirical measures
\begin{eqnarray}
 &  & \eta_{n}^{N}:=N^{-1}\sum_{i=1}^{N}\delta_{\zeta_{n}^{i}},\; n\geq0.\label{eq:particle-measures}\\
 &  & \gamma_{0}^{N}:=\eta_{0}^{N},\quad\gamma_{n}^{N}(\cdot):=\eta_{n}^{N}(\cdot)\prod_{p=0}^{n-1}\eta_{p}^{N}(G_{p}),\quad n\geq1.\label{eq:gamma_N_defn}
\end{eqnarray}

\subsubsection*{Algorithm \ref{alg:standard-SMC-algorithm}
as an instance of the generic particle system}
\label{sub:Algo1-as-GPS}

Let $\left(\mathsf{X},\mathcal{X}\right)$, $\pi_{0},$ $f$, $g$
be the ingredients of the HMM as in Section \ref{sec:SMC-and-HMM}.
To obtain Algorithm \ref{alg:standard-SMC-algorithm} as an instance
of the generic particle system under the law (\ref{eq:generic_law}),
take $E_{0}=\mathsf{X}$, $\mathcal{E}_{0}=\mathcal{X}$, and $E=\mathsf{X}^{2}$
, $\mathcal{E}=\mathcal{X}^{\otimes2}$. Then for points $x=(x_{1},x_{2})\in E$
and $y=(y_{1},y_{2})\in E$, take
\begin{equation}
M_{n}(x,dy)=\delta_{x_{2}}(dy_{1})q_{n}(y_{1},y_{2})dy_{2},\quad G_{n-1}(x)=\frac{g_{n-1}(x_{2})f(x_{1},x_{2})}{q_{n-1}(x_{1},x_{2})},\quad n\geq2,\label{eq:Algo1-as-GPS_1}
\end{equation}
and for $x\in E_{0}$, $y=(y_{1},y_{2})\in E$, take
\begin{equation}
M_{1}(x,dy)=\delta_{x}(dy_{1})q_{1}(y_{1},y_{2})dy_{2},\quad G_{0}(x)=\frac{g_{0}(x)\pi_{0}(x)}{q_{0}(x)},\quad\eta_{0}=\pi_{0}.\label{eq:Algo1-as-GPS_2}
\end{equation}
 Observe then that with $Z_{n}$ as in (\ref{eq:Z_n_defn}) and $Z_{n}^{N}$
as in Algorithm \ref{alg:standard-SMC-algorithm},
\begin{equation}
\gamma_{n+1}(1)\equiv Z_{n},\quad\gamma_{n+1}^{N}(1)\equiv Z_{n}^{N}.\label{eq:gamma_equiv_Z}
\end{equation}

\subsubsection*{Properties of the generic particle system}

We now give a brief account of certain key properties of the particle
system introduced above, which we shall later put to use in analyzing
the pairs algorithm.
\begin{rem}
It is known that when, for each $n\geq0$,
\begin{equation}
\sup_{x}G_{n}(x)<\infty,\label{eq:G_bounded}
\end{equation}
 we have for any $\varphi\in\mathcal{B}_{b}(E)$,
\begin{equation}
\eta_{n}^{N}(\varphi)\stackrel[N\rightarrow\infty]{a.s.}{\longrightarrow}\eta_{n}(\varphi),\quad\quad\gamma_{n}^{N}(\varphi)\stackrel[N\rightarrow\infty]{a.s.}{\longrightarrow}\gamma_{n}(\varphi),\label{eq:standard-Convergence}
\end{equation}
see e.g. \citep[Theorem 7.4.2]{DelMoral04}. Moreover, as discussed
in \citep[ Section 9.4.1]{DelMoral04},
\begin{equation}
\mathbb{E}\left[\gamma_{n}^{N}(\varphi)\right]=\gamma_{n}(\varphi)=\eta_{0}Q_{0,n}(\varphi),\quad\forall N\geq1.\label{eq:particle-filter-unbiasedness}
\end{equation}

\end{rem}
\citet{CerouDelMoral2011} have obtained second moment formulae for
$\gamma_{n}^{N}(1)$ via a study of the tensor product empirical measures:
\begin{eqnarray*}
 &  & \left(\eta_{n}^{N}\right)^{\otimes2}:=\frac{1}{N^{^{2}}}\sum_{i=1}^{N}\sum_{j=1}^{N}\delta_{\zeta_{n}^{i}}\otimes\delta_{\zeta_{n}^{j}}\\
 &  & \left(\gamma_{n}^{N}\right)^{\otimes2}:=\gamma_{n}^{N}(1)^{2}\left(\eta_{n}^{N}\right)^{\otimes2}.
\end{eqnarray*}

Introducing the coalescence operator $C$ which acts on bounded measurable
functions $F$ as $C(F)(x,y)=F(x,x)$, we have:
\begin{prop}
\label{prop:Cerou3.4-1} \citep[Lemma 3.2]{CerouDelMoral2011}For
any $F\in\mathcal{B}_{b}(E\times E)$,
\begin{equation}
\mathbb{E}\left[\left(\gamma_{n}^{N}\right)^{\otimes2}\left(F\right)\right]=\mathbb{E}\left[\eta_{0}^{\otimes2}C_{\epsilon_{0}}Q_{1}^{\otimes2}C_{\epsilon_{1}}\cdots Q_{n}^{\otimes2}C_{\epsilon_{n}}(F)\right]\label{eq:cerou_prop}
\end{equation}
and in particular for $F=1\otimes1$,
\begin{equation}
\mathbb{E}\left[\gamma_{n}^{N}(1)^{2}\right]=\mathbb{E}\left[\eta_{0}^{\otimes2}C_{\epsilon_{0}}Q_{1}^{\otimes2}C_{\epsilon_{1}}\cdots C_{\epsilon_{n-1}}Q_{n}^{\otimes2}(1\otimes1)\right]\label{eq:gamma-squared-representation}
\end{equation}
where $C_{1}:=C$, $C_{0}:=Id$ and $\left\{ \epsilon_{n}\right\} _{n\geq0}$
is a sequence of i.i.d., $\left\{ 0,1\right\} $-valued random variables
with distribution
\[
\mathbb{\mathbb{P}}(\epsilon_{n}=1)=1-\mathbb{P}(\epsilon_{n}=0)=\frac{1}{N}.
\]

\end{prop}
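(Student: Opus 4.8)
The plan is to prove \eqref{eq:cerou_prop} by induction on $n$ and then read off \eqref{eq:gamma-squared-representation} as the special case $F=1\otimes1$. Throughout I would abbreviate $\Gamma_n^N:=\left(\gamma_n^N\right)^{\otimes2}$ and write $\mathcal{F}_{n-1}:=\sigma(\zeta_0,\ldots,\zeta_{n-1})$ for the natural filtration of the particle system, the variables $\left\{\epsilon_n\right\}_{n\geq0}$ being independent of $(\zeta_n)_{n\geq0}$. The boundedness assumption \eqref{eq:G_bounded} together with finiteness of each $Q_n$ guarantees that $Q_n^{\otimes2}$ maps bounded measurable functions to bounded measurable functions, since $|Q_n(\varphi)(x)|\leq\|\varphi\|_\infty G_{n-1}(x)$; hence every composition appearing below is well defined and bounded, and the inductive hypothesis may legitimately be applied to the transformed integrands.

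For the base case $n=0$ I would use $\gamma_0^N=\eta_0^N$ and the fact that $\zeta_0^1,\ldots,\zeta_0^N$ are i.i.d.\ from $\eta_0$. Expanding $\Gamma_0^N(F)=N^{-2}\sum_{i,j}F(\zeta_0^i,\zeta_0^j)$ and splitting the double sum into the $N$ diagonal terms $i=j$ and the $N^2-N$ off-diagonal terms $i\neq j$ gives, after taking expectations,
\[
\mathbb{E}\!\left[\Gamma_0^N(F)\right]=\tfrac{1}{N}\,\eta_0^{\otimes2}(CF)+\bigl(1-\tfrac1N\bigr)\eta_0^{\otimes2}(F)=\mathbb{E}\!\left[\eta_0^{\otimes2}C_{\epsilon_0}(F)\right],
\]
which is \eqref{eq:cerou_prop} at $n=0$.

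The inductive step is the heart of the argument. Conditionally on $\mathcal{F}_{n-1}$ the offspring $\zeta_n^1,\ldots,\zeta_n^N$ are i.i.d.\ from the mixture $\Phi_{n-1}(\cdot):=\bigl(\textstyle\sum_k Q_n(\zeta_{n-1}^k,\cdot)\bigr)/\bigl(\sum_k G_{n-1}(\zeta_{n-1}^k)\bigr)$. Writing $\gamma_n^N(1)^2=\gamma_{n-1}^N(1)^2\,\eta_{n-1}^N(G_{n-1})^2$ and again splitting $N^{-2}\sum_{i,j}F(\zeta_n^i,\zeta_n^j)$ into diagonal and off-diagonal contributions, the conditional expectation becomes
\[
\mathbb{E}\!\left[\Gamma_n^N(F)\mid\mathcal{F}_{n-1}\right]=\gamma_{n-1}^N(1)^2\,\eta_{n-1}^N(G_{n-1})^2\,\mathbb{E}_{\epsilon_n}\!\left[\Phi_{n-1}^{\otimes2}C_{\epsilon_n}(F)\right].
\]
The two key algebraic identities are then $\eta_{n-1}^N(G_{n-1})\,\Phi_{n-1}=\eta_{n-1}^N Q_n$ and $(\mu Q_n)^{\otimes2}=\mu^{\otimes2}Q_n^{\otimes2}$, which together with the definition $\Gamma_{n-1}^N=\gamma_{n-1}^N(1)^2(\eta_{n-1}^N)^{\otimes2}$ collapse the prefactor to give
\[
\mathbb{E}\!\left[\Gamma_n^N(F)\mid\mathcal{F}_{n-1}\right]=\mathbb{E}_{\epsilon_n}\!\left[\Gamma_{n-1}^N\!\bigl(Q_n^{\otimes2}C_{\epsilon_n}(F)\bigr)\right].
\]

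The step I expect to require the most care is the closing of the induction, because the integrand $Q_n^{\otimes2}C_{\epsilon_n}(F)$ to which I wish to apply the inductive hypothesis is itself random through $\epsilon_n$. The clean way to handle this is to use that $\epsilon_n$ is independent of $(\zeta_0,\ldots,\zeta_{n-1})$ and of $\epsilon_0,\ldots,\epsilon_{n-1}$: I would condition on the value of $\epsilon_n$, apply \eqref{eq:cerou_prop} at level $n-1$ to the now-fixed bounded function $Q_n^{\otimes2}C_{\epsilon_n}(F)$, and then re-average over $\epsilon_n$, the boundedness established in the first paragraph legitimising the interchange of integration. This reinstates the outermost pair $Q_n^{\otimes2}C_{\epsilon_n}$ and yields \eqref{eq:cerou_prop} at level $n$. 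Finally, for \eqref{eq:gamma-squared-representation} I would specialise to $F=1\otimes1$ and observe that $C_{\epsilon_n}(1\otimes1)=1\otimes1$ for either value of $\epsilon_n$, so the outermost coalescence operator acts trivially and may be dropped, leaving the stated form terminating in $Q_n^{\otimes2}(1\otimes1)$ with last coalescence $C_{\epsilon_{n-1}}$.
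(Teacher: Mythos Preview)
The paper does not supply its own proof of this proposition; it is quoted verbatim as \citep[Lemma 3.2]{CerouDelMoral2011} and used as a black box in the derivation of Proposition~\ref{prop:Pairs-theory-1}. Your argument is correct and is precisely the standard one-step conditional/inductive proof one finds in that reference: split $(\eta_n^N)^{\otimes2}(F)$ into diagonal and off-diagonal parts to generate the Bernoulli coalescence, collapse $\eta_{n-1}^N(G_{n-1})\Phi_{n-1}$ into $\eta_{n-1}^N Q_n$, and close the induction by conditioning on the value of the independent $\epsilon_n$.
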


\begin{prop}
\label{prop:Cerou-linearity} \citep[Corollary 1.5]{CerouDelMoral2011}
If for each $p\geq0$ there exists a finite constant $c_{p}$ such
that
\begin{equation}
\sup_{n\geq p}\sup_{(x,y)\in E^{2}}\frac{Q_{p,n}(1)(x)}{Q_{p,n}(1)(y)}\leq c_{p},\label{eq:Q_np_condition}
\end{equation}
then for any $n\geq0$,
\[
N>\sum_{s=0}^{n}c_{s}\quad\Rightarrow\quad\mathbb{E}\left[\left(\frac{\gamma_{n}^{N}(1)}{\gamma_{n}(1)}-1\right)^{2}\right]\leq\frac{4}{N}\sum_{s=0}^{n}c_{s}.
\]
 \end{prop}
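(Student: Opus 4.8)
The plan is to reduce the relative variance to a normalized second moment and then control that second moment through the representation in Proposition \ref{prop:Cerou3.4-1}, using the ratio condition \eqref{eq:Q_np_condition} to bound the inflation produced by each coalescence event. First I would invoke the lack-of-bias property \eqref{eq:particle-filter-unbiasedness} with $\varphi = 1$, i.e. $\mathbb{E}[\gamma_n^N(1)] = \gamma_n(1)$, to write
\[
\mathbb{E}\left[\left(\frac{\gamma_n^N(1)}{\gamma_n(1)}-1\right)^2\right] = \frac{\mathbb{E}[\gamma_n^N(1)^2]}{\gamma_n(1)^2}-1 =: R_n-1,
\]
so that everything reduces to bounding $R_n-1$. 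I would then substitute the second-moment formula \eqref{eq:gamma-squared-representation} and expand the expectation over the independent $\{0,1\}$-valued variables $\epsilon_0,\ldots,\epsilon_{n-1}$, classifying terms by the set of indices at which coalescence occurs. The single term with $\epsilon_0=\cdots=\epsilon_{n-1}=0$ carries probability $(1-1/N)^n$ and, because tensor products of kernels applied to tensor-product functions factorize, equals exactly $\eta_0^{\otimes2}Q_1^{\otimes2}\cdots Q_n^{\otimes2}(1\otimes1)=\gamma_n(1)^2$; thus the whole deviation $R_n-1$ is driven by the coalescence events.

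The core estimate is that each coalescence inflates the value by a factor controlled by $c_s$. Consider the pattern in which coalescence occurs only at step $s$: propagating the tensor measure up to step $s$ gives $\gamma_s^{\otimes2}$, the operator $C$ restricts it to the diagonal, and the remaining factorized evolution $Q_{s+1}^{\otimes2}\cdots Q_n^{\otimes2}(1\otimes1)=Q_{s,n}(1)^{\otimes2}$ yields, after normalization by $\gamma_n(1)^2=\gamma_s(Q_{s,n}(1))^2$, the local inflation
\[
\frac{\eta_s\!\left(Q_{s,n}(1)^2\right)}{\eta_s\!\left(Q_{s,n}(1)\right)^2}.
\]
The ratio condition \eqref{eq:Q_np_condition} gives $Q_{s,n}(1)(x)\le c_s\,\eta_s(Q_{s,n}(1))$ for every $x$, whence this inflation is at most $c_s$. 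Weighted by their probabilities $\tfrac1N(1-\tfrac1N)^{n-1}$, the single-coalescence patterns therefore contribute at most $\tfrac1N\sum_{s=0}^{n-1}c_s\le\tfrac1N\sum_{s=0}^{n}c_s$ to $R_n$.

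It remains to control the patterns with two or more coalescences, and here the renewal structure of the representation is decisive: after a coalescence at step $s$ the two coordinates are merged to a common point and the evolution restarts as a second-moment problem from the diagonal, so the ratio bounds compose multiplicatively, each nested diagonal restriction contributing a further factor bounded in terms of the $c_s$. A pattern with $k$ coalescences then carries a probability weight of order $N^{-k}$ and an inflation bounded by a product of $c_s$'s, so the total multi-coalescence contribution is dominated by a geometric-type series in $\tfrac1N\sum_{s=0}^{n}c_s$. The hypothesis $N>\sum_{s=0}^{n}c_s$ makes this quantity strictly less than one, guaranteeing convergence of the series and letting all higher-order terms be absorbed, together with the leading single-coalescence bound, into the stated estimate $R_n-1\le\tfrac4N\sum_{s=0}^{n}c_s$, the constant $4$ providing the slack for the multi-coalescence remainder.

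I expect the main obstacle to be exactly this last step: organizing the sum over all coalescence patterns, proving that the successive diagonal restrictions make the ratio bounds compose multiplicatively with the correct telescoping, and showing that the resulting multiple series sums — under $N>\sum_s c_s$ — to a quantity absorbed by the universal constant $4$. By contrast, the reduction to $R_n-1$, the factorization of the no-coalescence term, and the single-coalescence inflation bound via \eqref{eq:Q_np_condition} are routine.
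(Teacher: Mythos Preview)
The paper does not prove this proposition; it is quoted verbatim from \citet[Corollary~1.5]{CerouDelMoral2011} and used as a black box. Your outline is correct and is essentially the argument of that reference.

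On the step you flag as the main obstacle: the multi-coalescence contribution organizes much more cleanly than a generic geometric series. A backward induction on $p$, with $h_n=1\otimes1$ and $h_p = C_{\epsilon_p}Q_{p+1}^{\otimes2}h_{p+1}$, gives the pointwise inequality
\[
h_p(x,y)\;\leq\;\Bigl(\prod_{s\in S,\;s\geq p} c_s\Bigr)\,Q_{p,n}(1)(x)\,Q_{p,n}(1)(y)
\]
for every coalescence pattern $S\subseteq\{0,\dots,n-1\}$; the only nontrivial case is $\epsilon_p=1$, where one applies $Q_{p,n}(1)(x)\leq c_p\,Q_{p,n}(1)(y)$ from \eqref{eq:Q_np_condition} directly. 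This yields $T_S/\gamma_n(1)^2\leq\prod_{s\in S}c_s$, and then the sum over all patterns, weighted by $(1/N)^{|S|}(1-1/N)^{n-|S|}$, \emph{factorizes exactly} into
\[
R_n\;\leq\;\prod_{s=0}^{n-1}\Bigl(1+\frac{c_s-1}{N}\Bigr).
\]
Under the hypothesis $N>\sum_{s=0}^{n}c_s$ the exponent $\sum_s(c_s-1)/N$ lies in $[0,1)$, and the elementary bound $e^x-1\leq(e-1)x$ on $[0,1]$ finishes the estimate with room to spare for the constant~$4$. So there is no delicate telescoping to manage: the product structure does all the work.
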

\begin{rem}
If for each $n\geq0$
\begin{equation}
\delta_{n}:=\sup_{(x,y)\in E^{2}}\frac{G_{n}\left(x\right)}{G_{n}(y)}<\infty\quad\text{and}\quad M_{n,n+m}(x,\cdot)\leq\beta_{n}^{(m)}M_{n,n+m}(y,\cdot),\quad\forall(x,y)\in E^{2}\label{eq:regularity-condition-1}
\end{equation}
for some constants $m\geq1$, $\beta_{n}^{(m)}\in\left[1,\infty\right[$,
then (\ref{eq:Q_np_condition}) is satisfied with $c_{p}=\beta_{p}^{(m)}\prod_{p\leq q<p+m}\delta_{q}$.
For a proof see e.g. \citep[Lemma 1.5]{CerouDelMoral2011}. We note
that the statement of \citep[Corollary 1.5]{CerouDelMoral2011} is
written in terms of the condition (\ref{eq:regularity-condition-1}),
but the proof of \citep[Corollary 1.5]{CerouDelMoral2011} actually
uses (\ref{eq:Q_np_condition}).
\end{rem}

\subsection{The pairs particle system}
\label{sub:The-Pairs-particle-system}

In order to derive the Pairs algorithm, our first step
is to obtain in Proposition \ref{prop:Pairs-theory-1} below an alternative
representation of the formula on the right of (\ref{eq:cerou_prop}).
Define for each $n\geq1$, the kernels,
\[
\mathbf{Q}_{n}^{\left(N\right)}(x,dy):=\frac{1}{N}Q_{n}(\check{x},d\check{y})Q_{n}(\check{x},d\hat{y})+\left(1-\frac{1}{N}\right)Q_{n}(\check{x},d\check{y})Q_{n}(\hat{x},d\hat{y}),
\]
with $y=(\check{y},\hat{y})\in E^{2}$, $x=(\check{x},\hat{x})\in E^{2}$
when $n\geq2$ and $x=(\check{x},\hat{x})\in E_{0}^{2}$ when $n=1$.
Similarly to $Q_{p,n}$ we write for $p<n$, $\mathbf{Q}_{p,n}^{\left(N\right)}:=\mathbf{Q}_{p+1}^{\left(N\right)}\cdots\mathbf{Q}_{n}^{\left(N\right)}$
and $\mathbf{Q}_{n,n}^{\left(N\right)}:=Id$. Note that we can equivalently
write $\mathbf{Q}_{n}^{\left(N\right)}$ using the previously defined
coalescence operator $C$ as:
\[
\mathbf{Q}_{n}^{\left(N\right)}=\frac{1}{N}CQ_{n}^{\otimes2}+\left(1-\frac{1}{N}\right)Q_{n}^{\otimes2}.
\]

\begin{prop}
\label{prop:Pairs-theory-1} For any $n\geq1$, $N\geq2$, and $F\in\mathcal{B}_{b}(E\times E)$,
\begin{equation}
\mathbb{E}\left[(\gamma_{n}^{N})^{\otimes2}(F)\right]=\eta_{0}^{\otimes2}\mathbf{Q}_{0,n}^{(N)}(F_{N}),\label{eq:Q_bold_identity}
\end{equation}
where $F_{N}:=N^{-1}CF+(1-1/N)F$, and in the particular case $F=1\otimes1$,
\begin{equation}
\mathbb{E}\left[\gamma_{n}^{N}(1)^{2}\right]=\eta_{0}^{\otimes2}\mathbf{Q}_{0,n}^{(N)}(1\otimes1).\label{eq:E_gamma_N=00003DQ}
\end{equation}
\end{prop}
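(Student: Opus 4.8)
The starting point is the second--moment representation of \citet{CerouDelMoral2011} recorded in Proposition~\ref{prop:Cerou3.4-1}, equation \eqref{eq:cerou_prop}: for bounded measurable $F$,
\[
\mathbb{E}\left[(\gamma_n^N)^{\otimes2}(F)\right]=\mathbb{E}\left[\eta_0^{\otimes2}\,C_{\epsilon_0}Q_1^{\otimes2}C_{\epsilon_1}\cdots Q_n^{\otimes2}C_{\epsilon_n}(F)\right],
\]
with $\{\epsilon_p\}_{p\geq0}$ i.i.d.\ $\{0,1\}$-valued, $\mathbb{P}(\epsilon_p=1)=1/N$, $C_1=C$, $C_0=Id$. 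The plan is to carry out the expectation over the $\epsilon_p$. The key structural remark is that, by associativity of composition, the random operator under the expectation regroups as $\eta_0^{\otimes2}\,(C_{\epsilon_0}Q_1^{\otimes2})(C_{\epsilon_1}Q_2^{\otimes2})\cdots(C_{\epsilon_{n-1}}Q_n^{\otimes2})\,C_{\epsilon_n}(F)$, so that $\epsilon_{p-1}$ appears in exactly one factor, namely $C_{\epsilon_{p-1}}Q_p^{\otimes2}$ (for $1\le p\le n$), while $\epsilon_n$ appears only in the terminal function $C_{\epsilon_n}(F)$. Using the reformulation $\mathbf{Q}_p^{(N)}=\tfrac1N CQ_p^{\otimes2}+(1-\tfrac1N)Q_p^{\otimes2}$ recorded just before the statement, together with the law of $\epsilon_{p-1}$, I would first observe that for any fixed bounded measurable $\varphi$ one has $\mathbb{E}\bigl[C_{\epsilon_{p-1}}Q_p^{\otimes2}(\varphi)\bigr]=\tfrac1N CQ_p^{\otimes2}(\varphi)+(1-\tfrac1N)Q_p^{\otimes2}(\varphi)=\mathbf{Q}_p^{(N)}(\varphi)$, and likewise $\mathbb{E}\bigl[C_{\epsilon_n}(F)\bigr]=\tfrac1N CF+(1-\tfrac1N)F=F_N$.

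The substance of the argument is then a peeling induction. Set $H_n:=C_{\epsilon_n}(F)$ and recursively $H_{p-1}:=C_{\epsilon_{p-1}}Q_p^{\otimes2}(H_p)$ for $p=n,n-1,\ldots,1$, so that $H_0=C_{\epsilon_0}Q_1^{\otimes2}C_{\epsilon_1}\cdots C_{\epsilon_n}(F)$ and hence $\mathbb{E}\bigl[(\gamma_n^N)^{\otimes2}(F)\bigr]=\eta_0^{\otimes2}\bigl(\mathbb{E}[H_0]\bigr)$, all the kernels in play preserving boundedness of functions so that every integral over the finite horizon $[0,n]$ is finite and the applications of Fubini are legitimate. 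I would prove by downward induction on $p$ the two claims: (i) $H_p$ is a measurable function of $(\epsilon_p,\ldots,\epsilon_n)$ only, and (ii) $\mathbb{E}[H_p]=\mathbf{Q}_{p,n}^{(N)}(F_N)$. The base case $p=n$ is $\mathbb{E}[H_n]=F_N=\mathbf{Q}_{n,n}^{(N)}(F_N)$ since $\mathbf{Q}_{n,n}^{(N)}=Id$. For the step, (i) gives $H_p$ independent of $\epsilon_{p-1}$, so by linearity of $C$ and $Q_p^{\otimes2}$ and Fubini,
\[
\mathbb{E}[H_{p-1}]=\mathbb{E}\bigl[C_{\epsilon_{p-1}}Q_p^{\otimes2}(H_p)\bigr]=\mathbf{Q}_p^{(N)}\bigl(\mathbb{E}[H_p]\bigr)=\mathbf{Q}_p^{(N)}\mathbf{Q}_{p,n}^{(N)}(F_N)=\mathbf{Q}_{p-1,n}^{(N)}(F_N),
\]
and $H_{p-1}$ is a function of $(\epsilon_{p-1},\ldots,\epsilon_n)$. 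Taking $p=0$ and applying $\eta_0^{\otimes2}$ gives $\mathbb{E}\bigl[(\gamma_n^N)^{\otimes2}(F)\bigr]=\eta_0^{\otimes2}\mathbf{Q}_{0,n}^{(N)}(F_N)$, which is \eqref{eq:Q_bold_identity}. I expect the one point genuinely requiring care to be exactly this interchange of expectation with a composition of operators: it is valid only because the $\epsilon_p$ are independent and each of them influences a single factor, so conditioning on $\sigma(\epsilon_{p-1})$ and using independence lets the deterministic linear operator $C_{\epsilon_{p-1}}Q_p^{\otimes2}$ be applied to $\mathbb{E}[H_p]$ rather than to $H_p$ itself; everything else is bookkeeping.

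Finally, for the special case $F=1\otimes1$ asserted in \eqref{eq:E_gamma_N=00003DQ}: since $C(1\otimes1)(x,y)=(1\otimes1)(x,x)=1$ for all $(x,y)\in E^2$, we have $C(1\otimes1)=1\otimes1$, hence $F_N=\tfrac1N(1\otimes1)+(1-\tfrac1N)(1\otimes1)=1\otimes1$, so \eqref{eq:Q_bold_identity} reduces to $\mathbb{E}\bigl[(\gamma_n^N)^{\otimes2}(1\otimes1)\bigr]=\eta_0^{\otimes2}\mathbf{Q}_{0,n}^{(N)}(1\otimes1)$. Combining this with $(\gamma_n^N)^{\otimes2}(1\otimes1)=\gamma_n^N(1)^2$, which is immediate from the definition $(\gamma_n^N)^{\otimes2}=\gamma_n^N(1)^2(\eta_n^N)^{\otimes2}$ and $\eta_n^N(1)=1$, yields \eqref{eq:E_gamma_N=00003DQ}.
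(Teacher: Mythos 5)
Your proposal is correct and follows essentially the same route as the paper: starting from the C\'erou--Del Moral representation \eqref{eq:cerou_prop}, you integrate out the i.i.d.\ variables $\epsilon_{0:n}$, pairing each $\epsilon_{p-1}$ with $Q_p^{\otimes2}$ to produce $\mathbf{Q}_p^{(N)}$ and $\epsilon_n$ with $F$ to produce $F_N$, exactly as in the paper's proof. The only difference is presentational --- your backward ``peeling'' induction with conditioning replaces the paper's explicit summation over $\epsilon_{0:n}\in\{0,1\}^{n+1}$ --- and the treatment of the case $F=1\otimes1$ is identical.
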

\begin{proof}
Starting from the identity of Proposition \ref{prop:Cerou3.4-1}, namely equation
(\ref{eq:cerou_prop}), we have
\begin{eqnarray*}
 &  & \mathbb{E}\left[(\gamma_{n}^{N})^{\otimes2}(F)\right]\\
 & = & \sum_{\epsilon_{0:n}\in\{0,1\}^{n+1}}\eta_{0}^{\otimes2}C_{\epsilon_{0}}Q_{1}^{\otimes2}C_{\epsilon_{1}}\cdots Q_{n}^{\otimes2}C_{\epsilon_{n}}(F)\prod_{p=0}^{n}\left(1-\frac{1}{N}\right)^{\mathbb{I}[\epsilon_{p}=0]}\left(\frac{1}{N}\right)^{\mathbb{I}[\epsilon_{p}=1]}\\
 & = & \sum_{\epsilon_{0:n-1}\in\{0,1\}^{n}}\int_{E_{0}^{2}\times E^{2n}}F_{N}(x_{n})\eta_{0}^{\otimes2}(dx_{0})\prod_{p=1}^{n}\left(C_{\epsilon_{p-1}}Q_{p}^{\otimes2}\right)(x_{p-1},dx_{p})\\
 &  & \left(1-\frac{1}{N}\right)^{\mathbb{I}[\epsilon_{p-1}=0]}\left(\frac{1}{N}\right)^{\mathbb{I}[\epsilon_{p-1}=1]}\\
 & = & \int_{E_{0}^{2}\times E^{2n}}F_{N}(x_{n})\eta_{0}^{\otimes2}(dx_{0})\prod_{p=1}^{n}\mathbf{Q}_{p}^{(N)}(x_{p-1},dx_{p})\\
 & = & \eta_{0}^{\otimes2}\mathbf{Q}_{0,n}^{(N)}(F_{N}),
\end{eqnarray*}
which establishes (\ref{eq:Q_bold_identity}). For (\ref{eq:E_gamma_N=00003DQ}),
note $C(1\otimes1)=1\otimes1$ and $(\gamma_{n}^{N})^{\otimes2}(1\otimes1)=\gamma_{n}^{N}(1)^{2}$.
\end{proof}
Throughout the remainder of this section $N\geq1$ is fixed. Similarly
to (\ref{eq:M_and_G_from_Q}), we now associate with $\left(\mathbf{Q}_{n}^{\left(N\right)}\right)_{)n\geq1}$
collections of Markov kernels $\left(\mathbf{M}_{n}^{(N)}\right)_{n\geq1}$
and positive functions $\left(\mathbf{G}_{n}^{(N)}\right)_{n\geq0}$
, given for $x=(\check{x},\hat{x})\in E^{2}$,
\begin{eqnarray}
\mathbf{G}_{n-1}^{(N)}(x) & := & \mathbf{Q}_{n}^{(N)}(x,E\times E)=\frac{1}{N}G_{n-1}(\check{x})^{2}+\left(1-\frac{1}{N}\right)G_{n-1}(\check{x})G_{n-1}(\hat{x}),\label{eq:bold_G_denf}\\
\mathbf{M}_{n}^{(N)}(x,dy) & := & \frac{\mathbf{Q}_{n}^{(N)}(x,dy)}{\mathbf{Q}_{n}^{(N)}(x,E\times E)}=\dfrac{\mathbf{Q}_{n}^{(N)}(x,dy)}{\int_{E\times E}\mathbf{Q}_{n}^{(N)}\left(x,dz\right)}=\dfrac{\mathbf{Q}_{n}^{(N)}(x,dy)}{\mathbf{G}_{n-1}^{(N)}(x)}\nonumber \\
 & = & p_{n-1}\left(\check{x},\hat{x}\right)M_{n}\left(\check{x},d\check{y}\right)M_{n}\left(\check{x},d\hat{y}\right)\label{eq:bold_M_defn}\\
 & + & (1-p_{n-1}\left(\check{x},\hat{x}\right))M_{n}\left(\check{x},d\check{y}\right)M_{n}\left(\hat{x},d\hat{y}\right),\nonumber
\end{eqnarray}
where
\begin{equation}
p_{n-1}\left(\check{x},\hat{x}\right):=\left[1+(N-1)\dfrac{G_{n-1}(\hat{x})}{G_{n-1}(\check{x})}\right]^{-1}.\label{eq:BernoulliProb-1}
\end{equation}
 Now similarly to (\ref{eq:gamma_n_defn}), define the measures $\left(\Gamma_{n}^{(N)}\right)_{n\geq0}$
and probability measures $\left(H_{n}^{(N)}\right)_{n\geq1}$ according
to $H_{0}^{(N)}:=\Gamma_{0}^{(N)}:=\eta_{0}^{\otimes2}$ and
\begin{equation}
\Gamma_{n}^{(N)}(\cdot):=\eta_{0}^{\otimes2}\mathbf{Q}_{0,n}^{(N)}(\cdot),\quad H_{n}^{(N)}(\cdot):=\frac{\Gamma_{n}^{(N)}(\cdot)}{\Gamma_{n}^{(N)}(E\times E)},\quad n\geq1.\label{eq:big_gamma_defn}
\end{equation}

With these objects, and for some fixed $M\geq1$, we associate a particle
process $\left(\xi_{n}\right)_{n\geq0}$ as follows. The initial configuration
$\xi_{0}=\left\{ \xi_{0}^{1},,\ldots,\xi_{0}^{M}\right\} $ consists
of $M$ i.i.d. pairs, each $\xi_{0}^{i}=(\check{\xi}_{0}^{i},\hat{\xi}_{0}^{i})$
valued in $E_{0}^{2}$ and having distribution $H_{0}^{(N)}=\eta_{0}^{\otimes2}$;
and for $n\geq1$, $\xi_{n}=\left\{ \xi_{n}^{1},,\ldots,\xi_{n}^{M}\right\} $
consists of $M$ pairs, each $\xi_{n}^{i}=(\check{\xi}_{n}^{i},\hat{\xi}_{n}^{i})$
valued in $E^{2}$, with evolution given by:
\begin{eqnarray}
\mathbb{P}(\left.\xi_{n}\in d\xi_{n}\right|\xi_{0},...,\xi_{n-1}): & = & \prod_{i=1}^{M}\frac{\sum_{j=1}^{M}\mathbf{Q}_{n}^{(N)}(\xi_{n-1}^{j},d\xi_{n}^{i})}{\sum_{j=1}^{M}\mathbf{Q}_{n}^{(N)}(\xi_{n-1}^{j},E)}\label{eq:pairs_law-1}\\
 & = & \prod_{i=1}^{M}\frac{\sum_{j=1}^{M}\mathbf{G}_{n-1}^{(N)}(\xi_{n-1}^{j})\mathbf{M}_{n}^{(N)}(\xi_{n-1}^{j},d\xi_{n}^{i})}{\sum_{j=1}^{M}\mathbf{G}_{n-1}^{(N)}(\xi_{n-1}^{j})},\quad n\geq1.\nonumber
\end{eqnarray}
We then introduce the empirical measures
\begin{eqnarray}
 &  & H_{n}^{(N,M)}:=M^{-1}\sum_{i=1}^{M}\delta_{\xi_{n}^{i}},\quad n\geq0,\label{eq:pairs-measures-defn}\\
 &  & \Gamma_{0}^{(N,M)}:=H_{0}^{(N,M)},\quad\Gamma_{n}^{(N,M)}(\cdot):=H_{n}^{(N,M)}(\cdot)\prod_{p=0}^{n-1}H_{p}^{(N,M)}(\mathbf{G}_{p}^{(N)}),\quad n\geq1.\nonumber
\end{eqnarray}

\subsubsection*{Algorithm \ref{alg:PairsAlgo} as an instance of the pairs particle
system.}

Let $\left(\mathsf{X},\mathcal{X}\right)$, $f$, $g$, $\pi_{0}$,
etc. be the ingredients of the HMM, defined in Section \ref{sec:SMC-and-HMM}.
To cast Algorithm \ref{alg:PairsAlgo} as an instance of the pairs
particle system described above, we just make the same choices as
in (\ref{eq:Algo1-as-GPS_1})-(\ref{eq:Algo1-as-GPS_2}). Moreover,
in that situation observe that for $\Xi_{n}^{(N,M)}$ as appearing
in Algorithm \ref{alg:PairsAlgo},
\begin{equation}
\Gamma_{n+1}^{(N,M)}(1\otimes1)\equiv\Xi_{n}^{(N,M)}\label{eq:Gam_equiv_Xi}
\end{equation}

\subsection{Proof of Theorem \ref{thm:Pairs-convergence-and-bound}}
\label{sub:Proof-of-Theorem}

To conclude the paper, we gather together various facts from the preceeding
sections of the appendix and complete the proof of Theorem \ref{thm:Pairs-convergence-and-bound}.
\begin{proof}[Proof of Theorem \ref{thm:Pairs-convergence-and-bound}]
\emph{}Unless stated otherwise, throughout the proof $N\geq2$ is
fixed to an arbitrary value. Comparing (\ref{eq:pairs_law-1}) with
(\ref{eq:generic_law}), we see that the pairs particle system described
in Section \ref{sub:The-Pairs-particle-system} is itself an instance
of the generic particle system described in Section \ref{sub:A-generic-particle};
in place of $E_{0}$, $\eta_{0}$, $E$, $G_{n}$, $M_{n}$ etc. in
the latter take $E_{0}^{2}$, $\eta_{0}^{\otimes2}$, $E^{2}$, $\mathbf{G}_{n}^{(N)}$,
$\mathbf{M}_{n}^{(N)}$ etc. This observation allows us to transfer
the various properties described in Section \ref{sub:A-generic-particle}
over to the pairs particle system, as follows.

Firstly, (\ref{eq:G_bounded})-(\ref{eq:standard-Convergence}) read
in this situation as: if for each $n\geq0$,
\begin{equation}
\sup_{x}\mathbf{G}_{n}^{(N)}(x)<\infty,\label{eq:bold_G_bounded}
\end{equation}
 then for any $F\in\mathcal{B}_{b}(E\times E)$,
\begin{equation}
H_{n}^{(N,M)}(F)\stackrel[M\rightarrow\infty]{a.s.}{\longrightarrow}H_{n}^{(N)}(F),\quad\quad\Gamma_{n}^{(N,M)}(F)\stackrel[M\rightarrow\infty]{a.s.}{\longrightarrow}\Gamma_{n}^{(N)}(F).\label{eq:pairs-convergence}
\end{equation}
 Secondly, the lack-of-bias property (\ref{eq:particle-filter-unbiasedness}),
combined with (\ref{eq:big_gamma_defn}) and (\ref{eq:E_gamma_N=00003DQ}),
reads as:
\begin{equation}
\mathbb{E}\left[\Gamma_{n}^{(N,M)}(1\otimes1)\right]=\Gamma_{n}^{(N)}(1\otimes1)=\eta_{0}^{\otimes2}\mathbf{Q}_{0,n}^{(N)}(1\otimes1)=\mathbb{E}\left[\gamma_{n}^{N}(1)^{2}\right],\quad\forall M\geq1.\label{eq:big_Gamma_id}
\end{equation}
Thirdly, Proposition \ref{prop:Cerou-linearity} reads: if for each
$p\geq0$ there exists a finite constant $\mathbf{c}_{p}$ such that
\begin{equation}
\sup_{n\geq p}\sup_{(x,y)\in E^{4}}\frac{\mathbf{Q}_{p,n}^{(N)}(1)(x)}{\mathbf{Q}_{p,n}^{(N)}(1)(y)}\leq\mathbf{c}_{p},\label{eq:boldQ_np_condition}
\end{equation}
then for any $n\geq0$,
\begin{equation}
M>\sum_{s=0}^{n}\mathbf{c}_{s}\quad\Rightarrow\quad\mathbb{E}\left[\left(\frac{\Gamma_{n}^{(N,M)}(1\otimes1)}{\mathbb{E}\left[\gamma_{n}^{N}(1)^{2}\right]}-1\right)^{2}\right]\leq\frac{4}{M}\sum_{s=0}^{n}\mathbf{c}_{s},\label{eq:Gamma_bound}
\end{equation}
where in writing the l.h.s. of the inequality in (\ref{eq:Gamma_bound}),
the identity $\Gamma_{n}^{(N)}(1\otimes1)=\mathbb{E}\left[\gamma_{n}^{N}(1)^{2}\right]$
from (\ref{eq:big_Gamma_id}) has been applied.

To complete the proof of Theorem \ref{thm:Pairs-convergence-and-bound}
it remains to show that in the setting (\ref{eq:Algo1-as-GPS_1})-(\ref{eq:Algo1-as-GPS_2}),
the conditions (\ref{eq:main_thm_weights_bounded}) and (\ref{eq:main_thm_reg_weights})-(\ref{eq:main_thm_reg_q})
imply respectively (\ref{eq:bold_G_bounded}) and (\ref{eq:boldQ_np_condition})
for suitable constants $\mathbf{c}_{p}$ which do not depend on $N$,
since then re-writting (\ref{eq:pairs-convergence}), (\ref{eq:big_Gamma_id})
and (\ref{eq:Gamma_bound}) using (\ref{eq:gamma_equiv_Z}) and (\ref{eq:Gam_equiv_Xi})
gives the claims of the Theorem.

The condition (\ref{eq:main_thm_weights_bounded}) does indeed imply
(\ref{eq:bold_G_bounded}), since by (\ref{eq:bold_G_denf}), $\sup_{x}\mathbf{G}_{n}^{(N)}(x)=\sup_{x}G_{n}(x)^{2}$
for any $N$. It remains to establish (\ref{eq:boldQ_np_condition}).
We first observe that with $G_{p}$ as in (\ref{eq:Algo1-as-GPS_1})-(\ref{eq:Algo1-as-GPS_2}),
conditions (\ref{eq:main_thm_reg_weights})-(\ref{eq:main_thm_reg_weights-1})
imply that there for each $p\geq0$,
\[
\mathbf{d}_{p}:=\sup_{x,y}\frac{\mathbf{G}_{p}^{(N)}(x)}{\mathbf{G}_{p}^{(N)}(y)}=\sup_{x,y}\frac{G_{p}(x)^{2}}{G_{p}(y)^{2}}\leq\left(\frac{w_{p}^{+}}{w_{p}^{-}}\right)^{2}<+\infty.
\]
Now consider (\ref{eq:boldQ_np_condition}) for some given $p.$ When
$n\leq p+1$,
\[
\frac{\mathbf{Q}_{p,n}^{(N)}(1)(x)}{\mathbf{Q}_{p,n}^{(N)}(1)(y)}\leq\mathbf{d}_{p}.
\]
For $n\geq p+2$, suppose there exist contants $0<\mathbf{k}_{p}^{-}\leq\mathbf{k}_{p}^{+}<+\infty$
independent of $N$, and $\mathbf{m}_{p}^{(N)}\in\mathcal{P}(\mathsf{X}^{2}\times\mathsf{X}^{2})$
such that
\begin{equation}
\mathbf{k}_{p}^{-}\mathbf{m}_{p}^{(N)}(\cdot)\leq\mathbf{Q}_{p,p+2}^{(N)}(x,\cdot)\leq\mathbf{k}_{p}^{+}\mathbf{m}_{p}^{(N)}(\cdot),\quad\forall x.\label{eq:bold_Q_mixing}
\end{equation}
Then
\[
\frac{\mathbf{Q}_{p,n}^{(N)}(1)(x)}{\mathbf{Q}_{p,n}^{(N)}(1)(y)}=\frac{\mathbf{Q}_{p,p+2}^{(N)}\mathbf{Q}_{p+2,n}^{(N)}(1)(x)}{\mathbf{Q}_{p,p+2}^{(N)}\mathbf{Q}_{p+2,n}^{(N)}(1)(y)}\leq\frac{\mathbf{k}_{p}^{+}}{\mathbf{k}_{p}^{-}}\frac{\mathbf{m}_{p}^{(N)}\mathbf{Q}_{p+2,n}^{(N)}(1)}{\mathbf{m}_{p}^{(N)}\mathbf{Q}_{p+2,n}^{(N)}(1)}=\frac{\mathbf{k}_{p}^{+}}{\mathbf{k}_{p}^{-}},
\]
and (\ref{eq:boldQ_np_condition}) would then hold with $\mathbf{c}_{p}:=\mathbf{d}_{p}\vee\frac{\mathbf{k}_{p}^{+}}{\mathbf{k}_{p}^{-}}$.
Thus to complete the proof we shall show that conditions (\ref{eq:main_thm_reg_weights})-(\ref{eq:main_thm_reg_q})
imply (\ref{eq:bold_Q_mixing}). To this end note that:
\begin{eqnarray}
\mathbf{Q}_{p,p+2}^{(N)} & = & \left(1-\frac{1}{N}\right)\left[\frac{1}{N}C+\left(1-\frac{1}{N}\right)Id\right]Q_{p+1}^{\otimes2}Q_{p+2}^{\otimes2}\nonumber \\
 & + & \frac{1}{N}\left[\frac{1}{N}C+\left(1-\frac{1}{N}\right)Id\right]Q_{p+1}^{\otimes2}CQ_{p+2}^{\otimes2},\label{eq:Q^N_p,p+2}
\end{eqnarray}
and with
\[
\mathbf{k}_{p}^{-}:=\left(w_{p}^{-}w_{p+1}^{-}\epsilon_{p+1}^{-}\right)^{2},\quad\quad\mathbf{k}_{p}^{+}:=\left(w_{p}^{+}w_{p+1}^{+}\epsilon_{p+1}^{+}\right)^{2},
\]
for all $x=(x_{1},x_{2})$,
\begin{equation}
\mathbf{k}_{p}^{-}\mu_{p+1}^{\otimes2}(dy_{1})q_{p+2}^{\otimes2}(y_{1},dy_{2})\leq Q_{p+1}^{\otimes2}Q_{p+2}^{\otimes2}(x,dy)\leq\mathbf{k}_{p}^{+}\mu_{p+1}^{\otimes2}(dy_{1})q_{p+2}^{\otimes2}(y_{1},dy_{2})\label{eq:Q_two_step_bounds}
\end{equation}
 and
\begin{eqnarray}
\mathbf{k}_{p}^{-}\int_{\mathsf{X}}\mu_{p+1}(dz)\delta_{z}^{\otimes2}(dy_{1})q_{p+2}^{\otimes2}(y_{1},dy_{2}) & \leq & Q_{p+1}^{\otimes2}CQ_{p+2}^{\otimes2}(x,dy)\leq\label{eq:Q_two_step_C_bounds}\\
 & \leq & \mathbf{k}_{p}^{+}\int_{\mathsf{X}}\mu_{p+1}(dz)\delta_{z}^{\otimes2}(dy_{1})q_{p+2}^{\otimes2}(y_{1},dy_{2}),\nonumber
\end{eqnarray}
where $\delta_{z}(\cdot)$ is the Dirac measure on $\mathsf{X}$ located
at $z$, and $dy=dy_{1}dy_{2}$ is to be understood as the infinitesimal
neighbourhood of $y=(y_{1},y_{2})\in\mathsf{X}^{2}\times\mathsf{X}^{2}$.
Combining (\ref{eq:Q^N_p,p+2})-(\ref{eq:Q_two_step_C_bounds}) we
find that (\ref{eq:bold_Q_mixing}) holds with
\[
\mathbf{m}_{p}^{(N)}(dy):=\left(1-\frac{1}{N}\right)\mu_{p+1}^{\otimes2}(dy_{1})q_{p+2}^{\otimes2}(y_{1},dy_{2})+\frac{1}{N}\int_{\mathsf{X}}\mu_{p+1}(dz)\delta_{z}^{\otimes2}(dy_{1})q_{p+2}^{\otimes2}(y_{1},dy_{2}).
\]

\end{proof}

\begin{acknowledgements}
SK would like to thank the University of Bristol for providing him
with University of Bristol Postgraduate Scholarship during the time
this research was done.
\end{acknowledgements}

% BibTeX users please use one of
\bibliographystyle{spbasic}      % basic style, author-year citations
\bibliography{Pairs_algo_and_Marginal_likelihood_Springer_style}   % name your BibTeX data base

% Non-BibTeX users please use
%\begin{thebibliography}{}
%
% and use \bibitem to create references. Consult the Instructions
% for authors for reference list style.
%
%\bibitem{RefJ}
% Format for Journal Reference
%Author, Article title, Journal, Volume, page numbers (year)
% Format for books
%\bibitem{RefB}
%Author, Book title, page numbers. Publisher, place (year)
% etc
%\end{thebibliography}

\end{document}